\newclass{\EXPTIME}{EXPTIME}
\newclass{\PTIME}{PTIME}
\newclass{\NLOGSPACE}{NLOGSPACE}
\newclass{\LOGSPACE}{LOGSPACE}
\newcommand{\sdomi}[2]{\begin{psmallmatrix}#1\\#2\end{psmallmatrix}} 
\newcommand{\A}{\mathcal{A}}
\newcommand{\B}{\mathcal{B}}
\newcommand{\sep}{\#}
\newcommand{\pad}{\bot}
\newcommand{\PAD}[2]{{\rm PAD}_{#1}(#2)}
\newcommand{\PADD}[1]{{\rm PAD}_{#1}}
\newcommand{\ASSERT}[1]{\textbf{assert}(#1)}
\newcommand{\defn}[1]{\textit{#1}} 
\newcommand{\N}{\ensuremath{\mathbb{N}}}
\newcommand{\problemx}[3]{
\par\noindent\underline{\sc#1}\par\nobreak\vskip.2\baselineskip
\begingroup\clubpenalty10000\widowpenalty10000
\setbox0\hbox{\bf Instance: }\setbox1\hbox{\bf Question: }
\dimen0=\wd0\ifnum\wd1>\dimen0\dimen0=\wd1\fi
\vskip-\parskip\noindent
\hbox to\dimen0{\box0\hfil}\hangindent\dimen0\hangafter1\ignorespaces#2\par
\vskip-\parskip\noindent
\hbox to\dimen0{\box1\hfil}\hangindent\dimen0\hangafter1\ignorespaces#3\par
\endgroup}
\newcommand{\Lang}{\ensuremath{\mathcal{L}}} 
\newif\ifdraft\drafttrue
\newcommand\anthony[1]{{\color{blue}
[#1 - \textbf{Anthony}]}}
\newcommand{\todo}[1]{{\color{red} TODO: #1}}
\renewcommand{\note}[1]{{\color{orange} NOTE: #1}}
\newcommand\anthony[1]{}
\newcommand\rupak[1]{}
\newcommand\todo[1]{}
\renewcommand{\note}[1]{}
\newcommand{\OMIT}[1]{\iffalse #1 \fi}
\title{Monadic Decomposability of Regular Relations}
\author{Pablo Barcel{\'{o}}}{Department of Computer Science, University of Chile \& IMFD Chile}{pbarcelo@dcc.uchile.cl}{https://orcid.org/0000-0003-2293-2653}{}
\author{Chih-Duo Hong}{Department of Computer Science, University of Oxford, United Kingdom}{chih-duo.hong@st-hughs.ox.ac.uk}{}{}
\author{Xuan-Bach Le}{Department of Computer Science, University of Oxford, United Kingdom}{bachdylan@gmail.com}{}{}
\author{Anthony W. Lin}{Technische Universit\"{a}t Kaiserslautern,
Germany}{anthony.lin@cs.uni-kl.de}{https://orcid.org/0000-0003-4715-5096}{}
\author{Reino Niskanen}{Department of Computer Science, University of Oxford, United Kingdom}{reino.niskanen@cs.ox.ac.uk}{https://orcid.org/0000-0002-2210-1481}{}
\authorrunning{P. Barcel\'o et al.}
\keywords{Transducers, Automata, Synchronized Rational Relations, Ramsey Theory,
Variable Independence, Automatic Structures}
\begin{document}

\maketitle

\begin{abstract}
    Monadic decomposibility --- the ability to determine 
    whether 
    a formula in a given logical theory can be decomposed into a boolean
    combination of monadic formulas --- is a powerful tool for devising a
    decision procedure for a given logical theory.
    \OMIT{
    Veanes et al. recently showed
    that this concept could have numerous applications in the domain of
    satisfiability modulo theories (SMT), wherein theories are typically
    quantifier-free. 
    }
    In this paper, we revisit a classical decision problem in automata theory:
    given a regular (a.k.a. synchronized rational) relation, determine whether
    it is recognizable, i.e., it has a monadic decomposition (that is,
    a representation as a boolean combination of cartesian products of regular languages).
    Regular relations are expressive formalisms which, using an
    appropriate string encoding, can capture relations definable in Presburger 
    Arithmetic. In fact, their expressive power coincide with relations
    definable in a universal automatic structure; equivalently, those
    definable by finite set interpretations in WS1S (Weak Second Order Theory 
    of One Successor). 
    Determining whether a regular relation admits a recognizable relation was known 
    to be decidable (and in exponential time for binary relations), but its 
    precise complexity still 
    hitherto remains open. 
    Our main contribution is to fully settle the complexity of this decision 
    problem by developing new techniques employing infinite Ramsey theory. 
    The complexity for DFA (resp.~NFA) representations
    of regular relations is shown to be $\NLOGSPACE$-complete 
    (resp.~$\PSPACE$-complete). 

\end{abstract}
    \OMIT{
    As for
    a regular language which is generated by an automaton that runs over a word 
    of over a given alphabet, an $n$-ary regular relation is generated by an 
    automaton which runs over a word of a product alphabet in a synchronized
    fashion. 
    }

\section{Introduction}
\label{sec:intro}

Monadic decompositions for computable relations have been studied in many 
different guises, and applied to many different problem domains, e.g., see
\cite{CC79,Lib00,VBN+17,CCG06,LS17,LS19,Valiant75}. 
The notion of ``monadic decomposability''
essentially captures the intuitive notion that the components in
a given $n$-ary relation $R \subseteq U^n$ are sufficiently independent 
from (i.e. not tightly coupled, or interdependent, with) each other. Some 
examples are in order.
Given two subsets $X, Y \subseteq U$, then $X \times Y$ is an instance
of relations whose two components are completely independent from each other.
On the other hand, the equality relation $\{ (x,x) : x \in U \}$ is an example 
of relations whose two components are tightly coupled. In this paper, we will
adopt the commonly studied notion of component-independence\footnote{Also called
variable-independence.} (e.g.~\cite{Lib00,VBN+17,Berstel,Valiant75}) in a 
relation $R \subseteq U^n$ that lies between the extremes as exemplified in the
above examples, i.e., that $R$ is expressible as a \emph{finite union} 
$\bigcup_{i=1}^r X_{i,1} \times \cdots \times X_{i,n}$ of products, where
each $X_{i,j}$ is expressible in the same language $\Lang$ (e.g. a logic or a 
machine model) wherein $R$ is expressed. 

Why should one care about
monadic decomposable relations? The main reason is that applying appropriate 
monadic restrictions could make an undecidable problem decidable, 
and in general turn a difficult problem into one more amenable to analysis. 
Several examples are in order. Firstly,
the well-known cartesian abstractions in abstract interpretation \cite{CC79} 
overapproximate 
the set  $R \subseteq U^n$ of reachable states at a certain program point
by a relation $R' \subseteq X_1 \times \cdots \times X_m$
such that $R \subseteq R'$.
Having $R'$ instead of $R$ 
sometimes allows a static analysis tool to prove correctness properties about a 
program that is otherwise difficult to do with only $R$. Another example
includes restrictions to monadic predicates in undecidable logics that result
in decidability, e.g., monadic
first-order logic and extensions (\cite{Boolos-book,BGG97,BDT10}), as well
as monadic second-order theory of successors \cite{BGG97}. Monadic 
decomposability also found applications in more efficient variable elimination
in constraint logic programming (e.g.~\cite{Imbert94}), as well as constraint
processing algorithms for constraint database queries
(e.g.~\cite{Lib00,Libkin-CD-book}).
Finally, monadic decompositions in the context of SMT (Satisfiability Modulo 
Theories), whose study was recently initiated in \cite{VBN+17}, have 
numerous applications, including constraint solving over strings 
\cite{VBN+17,popl19}.

The focus of this paper is to revisit a classical problem of determining monadic
decomposability of \defn{regular relations}, which are also known as 
\defn{synchronized rational relations} \cite{FS93,BLSS03,Bl99}. The study of
classes of relations over words definable by different classes of multi-tape 
(finite) automata is by now a well-established subfield of formal language 
theory.  This study was initiated by Elgot, Mezei, and Nivat in the 1960s 
\cite{EM,Nivat}; also see the surveys \cite{Berstel,Choffrut-survey}.
In particular, we have a strict hierarchy of classes of relations as follows:
recognizable relations, synchronized rational relations, deterministic rational 
relations, and rational relations. All these classes over unary relations (i.e.
languages) coincide with the class of regular languages. \defn{Rational 
relations} are
relations $R \subseteq (\Sigma^*)^n$ definable by multi-tape 
automata, where the tape heads move from left to right (in the usual way for
finite automata) but possibly at different speeds (e.g. in a transition, the 
first head could stay at the same position, whereas the second head moves to
the right by one position). \defn{Deterministic rational relations} are simply
those rational relations that can be described by deterministic multi-tape
automata. So far, the heads of the tapes can move at different speeds.
\defn{Regular relations} (a.k.a. \defn{synchronized rational relations}) are
those relations that are definable by multi-tape automata, all of whose heads 
move to the right in each transition. Unlike (non)deterministic rational 
relations, regular relations are extremely well-behaved, e.g., they are closed 
under first-order operations and, therefore, have decidable
first-order theories \cite{hodgson83}. Regular relations are also known to
coincide with those relations that are first-order definable over a universal
automatic structure \cite{BLSS03,Bl99}; equivalently, those relations that are
definable by finite-set interpretations in the weak-monadic theory of one
successor (WS1S) \cite{CL07}.
Finally, the weakest class of relations in the hierarchy are 
\defn{recognizable relations}:
those relations that are definable as a finite union of products of
regular languages or, equivalently, relations that can be defined as a boolean
combination of regular constraints (i.e. atomic formulas of the form $x \in L$, 
where $L$ is a regular language, asserting that the word $x$ is in $L$).
Recognizable relations are, therefore, those relations definable by multi-tape
automata that exhibit monadic decomposability.

One of the earliest results on deciding whether a relation is monadic decomposable 
follows from Stearns in 1967 \cite{Stearns} and the characterization of a binary
relation $R\subseteq A^*\times B^*$ by
$L_R=\{\textsf{rev}(u)\#v \mid (u,v)\in R\}$, where $\textsf{rev}(u)$
is the mirror image of $u$. In \cite{CCG06} it was proven that
$L_R$ is a regular language if and only if $R$ has a monadic
decomposition and if $R$ is a deterministic rational relation, then $L_R$ is a
deterministic context-free language.
Due to this characterization, Stearns's result implies that
whether a deterministic $n$-ary rational relation is monadic decomposable 
(i.e. recognizable) is decidable in the case when $n=2$. Shortly thereafter, 
Fischer and Rosenberg \cite{FR68} showed that the same problem is unfortunately 
undecidable for the full class of binary rational relations. A few years later
Valiant \cite{Valiant75} improved the upper bound complexity for the case solved 
by Stearns to double exponential-time. This is still the best known upper bound
for the monadic decomposability problem for deterministic binary rational 
relations to date and, furthermore, no specific lower bounds are known. More
recently Carton \emph{et al.} \cite{CCG06} adapted the techniques from
\cite{Stearns,Valiant75} to show that this decidability extends
to general $n$-ary relations, though no complexity analysis was provided.
The problem of monadic decomposability for regular relations has also been
studied in the literature. Of course decidability with a double exponential-time
upper bound for the binary case follows from \cite{Valiant75}. In 2000 Libkin 
\cite{Lib00} gave general conditions for monadic decomposability for first-order
theories, which easily implies decidability for monadic decomposability for 
general 
$k$-ary regular relations. This is because regular relations are simply those
relations that are definable in a universal automatic structures
\cite{BLSS03,Bl99}.
The result of Libkin was not widely known in the automata theory community and 
in fact the problem was posed as an open  problem in French version of 
\cite{Sakarovitch09} in 2003 and later on, Carton \emph{et al.} \cite{CCG06} 
provided a double-exponential-time algorithm for 
deciding whether an $n$-ary regular relation is monadic decomposable.
More precisely, even though
it was claimed in the paper that the algorithm runs in single-exponential
time, it was noted in a recent paper by L\"{o}ding and Spinrath \cite{LS17,LS19}
(with which the authors of \cite{CCG06} also agreed, as claimed in \cite{LS19}) that
the algorithm actually runs in double-exponential time. L\"{o}ding and Spinrath
\cite{LS17,LS19} gave a single-exponential-time algorithm (inspired by techniques 
from \cite{Valiant75}) for monadic decomposability of \emph{binary} regular 
relations.

\paragraph*{Contributions}
In this paper we provide the precise complexity of monadic decomposability of
regular relations, 
closing the open questions left by Carton \emph{et
al.} \cite{CCG06} and L\"{o}ding and Spinrath \cite{LS17,LS19}. In particular, we
show the following.

\begin{theorem}\label{thm:main}
    Deciding whether a given regular relation $R$ is monadic decomposable is 
    $\NLOGSPACE$-complete (resp.~$\PSPACE$-complete), if $R$ is given by a 
    DFA (resp.~an NFA). 
\end{theorem}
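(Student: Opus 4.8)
The plan is to establish membership in $\NLOGSPACE$ (resp.\ $\PSPACE$) and matching hardness, with a common combinatorial core for the two upper bounds and a common reduction for the two lower bounds.

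\emph{Reduction to counting sections.} First I will recall the standard characterization: a regular relation $R \subseteq (\Sigma^*)^n$ is recognizable iff, for every coordinate $i$, the equivalence $\sim_i$ on $\Sigma^*$ --- with $u \sim_i u'$ iff fixing coordinate $i$ to $u$ or to $u'$ yields the same section of $R$ --- has finite index. The nontrivial ``if'' direction uses that each $\sim_i$-class is regular (regular relations being closed under Boolean operations and projection), so that $R$ becomes a finite union of products; an induction on arity reduces the general case to the binary one. Hence it suffices to decide, given $R$ (and a coordinate, which can be guessed at a cost of $O(\log n)$ space), whether $R$ has \emph{infinitely many} distinct sections; below I describe the binary case $R \subseteq \Sigma^* \times \Sigma^*$ with sections $R_u = \{v : (u,v) \in R\}$, the general case being identical.

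\emph{Upper bound via a polynomial pumping witness.} Let $\A$ be a DFA for the synchronized encoding of $R$ over the product alphabet with padding. The key lemma --- and the main obstacle, see below --- is: $R$ has infinitely many distinct sections iff $\A$ contains a \emph{pumping witness}, by which I mean a bounded number of states of $\A$ together with the assertion that certain labelled paths exist between them (over the product alphabet, some restricted to carry the padding symbol on one track), forming a loop that can be iterated on the first track, a correlated loop governing a uniform family of distinguishing words on the second track, and acceptance/rejection data certifying that the words $u_1, u_2, \dots$ obtained by iterating the loop have pairwise distinct sections. The easy direction is a direct pumping argument. Given the lemma, the algorithm for a DFA input guesses the (constantly many) states of a candidate witness and verifies the path-existence and acceptance conditions by on-the-fly reachability in $\A$; this uses logarithmic space, so the problem is in $\NLOGSPACE$. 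For an NFA input I will first determinize $\A$ --- a single-exponential blow-up --- and run exactly this algorithm on the resulting DFA, computing its transition function from the NFA on demand; since an $\NLOGSPACE$ procedure running on an input of size $2^{O(m)}$ uses only $O(m)$ work space, it runs in nondeterministic --- hence, by Savitch's theorem, deterministic --- polynomial space in $m$, which gives the $\PSPACE$ upper bound.

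\emph{The Ramsey argument (main obstacle).} The hard direction of the lemma is where infinite Ramsey theory enters, and it is the technical heart of the proof. Suppose $R$ has infinitely many distinct sections, witnessed by words $u_1, u_2, \dots$ together with, for all $i \neq j$, a distinguishing word $v_{ij}$ (exactly one of $(u_i, v_{ij}), (u_j, v_{ij})$ lies in $R$). I will colour each pair $\{i,j\}$ by a finite type recording the run profiles of $\A$ on the relevant convolutions of $u_i$, $u_j$ and $v_{ij}$ (finitely many, as $\A$ is finite), together with the order type of the three lengths $|u_i|, |u_j|, |v_{ij}|$ bucketed into finitely many cases. The infinite Ramsey theorem then yields an infinite monochromatic subsequence; after restricting to it, all pairs have the same type, and a pigeonhole (or K\"{o}nig's lemma) argument on the states traversed while reading the $u_i$ and the $v_{ij}$ --- refined, in the standard way, to an idempotent loop --- exposes the loop and its correlate, i.e.\ a pumping witness, which is polynomial in $\|\A\|$ because it is built from simple loops. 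The obstacle is precisely the taming of the distinguishing words: a priori their lengths and contents are unbounded and unrelated to $u_i, u_j$, so the colouring must encode enough of the length geometry, and of the automaton's behaviour on the ``excess'' parts (which themselves have to be argued pumpable), to make a periodic pattern extractable, while still using only finitely many colours --- possibly requiring an iterated colouring of pairs and then of triples. Getting this balance right is the crux.

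\emph{Lower bounds.} Both hardness results will follow from a single reduction, from the universality problem for the ambient machine model. Given $\B$ over $\Sigma$, let $\# \notin \Sigma$, put $\Sigma' = \Sigma \cup \{\#\}$, and consider
\[
  R_\B \ :=\ \{(u,v) \in (\Sigma')^* \times (\Sigma')^* : u \neq v\} \ \cup\ \{(u,u) : u \in (L(\B)\#)^* L(\B)\}.
\]
An automaton for $R_\B$ of the same type as $\B$ and of size polynomial in $\|\B\|$ is constructible in logarithmic space: the first set is accepted by a fixed-size machine, and the second by running a size-$O(\|\B\|)$ machine for $(L(\B)\#)^* L(\B)$ --- which is deterministic whenever $\B$ is --- while checking that the two tracks agree. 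If $L(\B) = \Sigma^*$ then $(L(\B)\#)^* L(\B) = (\Sigma')^*$ and $R_\B = (\Sigma')^* \times (\Sigma')^*$, which is recognizable. If instead some $w \in \Sigma^* \setminus L(\B)$, then $w(\#w)^k \notin (L(\B)\#)^* L(\B)$ for every $k \geq 0$, so the sections of $R_\B$ include the pairwise distinct languages $(\Sigma')^* \setminus \{w(\#w)^k\}$, whence $R_\B$ has infinitely many sections and is not recognizable. Thus $R_\B$ is recognizable iff $L(\B)$ is universal. Taking $\B$ to be a DFA gives $\NLOGSPACE$-hardness (DFA universality being $\NLOGSPACE$-hard, by the routine reduction from directed reachability), and taking $\B$ to be an NFA gives $\PSPACE$-hardness (NFA universality being $\PSPACE$-complete); together with the previous paragraphs this establishes $\NLOGSPACE$-completeness for DFA and $\PSPACE$-completeness for NFA inputs.
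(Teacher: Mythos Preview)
Your overall plan matches the paper's: use the finite-index characterization, reduce $n$-ary to binary, establish a Ramsey-based pumping characterization of non-recognizability that can be checked in $\NLOGSPACE$ on the (possibly determinized) automaton, and prove hardness by reduction. The $n$-ary-to-binary step and the on-the-fly determinization for the NFA case are essentially as in the paper.

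There is, however, a concrete gap in your Ramsey step. You propose to colour pairs $\{i,j\}$ by run profiles of $\A$ on convolutions of $u_i$, $u_j$, and the distinguishing word $v_{ij}$, and then extract a pumping witness living in $\A$. The difficulty is that $v_{ij}$ varies with the pair, so even after passing to a monochromatic set you still have a different distinguishing word for every pair; it is not clear how a \emph{fixed} configuration of states in $\A$ certifies that the iterated $u$'s are pairwise inequivalent, since the witness $v$ for each pair must actually be produced, not merely asserted to exist. Your remark about ``possibly requiring an iterated colouring of pairs and then of triples'' signals precisely this missing piece. The paper sidesteps the whole issue by first building, in $\LOGSPACE$ from a DFA for $R$ via products, complement, and projection, a single polynomial-size NFA for $R^{\not\sim} = \{(w,w') : w \not\sim w'\}$. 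The existential over the distinguishing word is absorbed into acceptance of $R^{\not\sim}$, and both the Ramsey colouring (a quintuple of transition-monoid types, Proposition~\ref{prop:char}) and the pumping witness (four states $q,q',p,r$ of the $R^{\not\sim}$-automaton with a short list of reachability constraints, Lemma~\ref{theo:main}) are then formulated entirely inside $R^{\not\sim}$. Your sketch can be completed, but the clean way to do it is exactly this move to $R^{\not\sim}$, which you have not made explicit.

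Your lower-bound construction is different from the paper's and is correct. The paper uses two separate reductions: from NFA universality for the NFA case (via $R_1 \cup R_2$ with $R_1$ the diagonal and $R_2 = (L\cdot\{\#\})^* \times (\Sigma^*\cdot\{\#\})^*$), and a direct reduction from DAG reachability for the DFA case. Your single construction $R_\B = \{(u,v):u\neq v\} \cup \{(u,u): u \in (L(\B)\#)^*L(\B)\}$ handles both at once, and one checks readily that a DFA for $R_\B$ of size $O(\|\B\|)$ is obtainable from a DFA $\B$. Do note, though, that your DFA hardness then goes through ``DFA universality is $\NLOGSPACE$-hard'', which already uses Immerman--Szelepcs\'enyi: the routine reduction from directed reachability yields hardness of \emph{non}-universality. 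This is fine, but less direct than the paper's argument.
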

The lower bounds hold already for binary 
relations (Lemma~\ref{thm:lb} and Lemma~\ref{lem:lbdfa} in Section~\ref{sec:hardness}).
To prove the upper bounds, we first prove the upper bounds for binary relations
(Lemma~\ref{lem:part1} in Section~\ref{sec:binary})
and then extend them to $n$-ary relations for any 
given 
$n > 2$
(Lemma~\ref{thm:nary} in Section~\ref{sec:general}).

The existing proof techniques (e.g. in \cite{CCG06,LS19,Lib00}) for deciding
monadic decomposability typically aim for finding proofs that the relations are
monadic decomposable. In contrast, our proof technique relies on finding a proof
that a relation is \emph{not} monadic decomposable. As a brief illustration,
suppose we want to show that the regular relation $R = \{ (v,v) : v \in 
\Sigma^* \}$
is not monadic decomposable. 
We define an equivalence relation $\sim\ \subseteq \Sigma^* \times
\Sigma^*$ as 
\begin{align*}
    x \sim y := \forall z( [R(x,z) \leftrightarrow R(y,z)] \wedge 
                            [R(z,x) \leftrightarrow R(z,y)]).
\end{align*}
This relation is regular since regular relations are closed under first-order
operations \cite{Sakarovitch09} (a fact that was also used in \cite{CCG06}), but the size of the 
automaton
for this relation is unfortunately quite large; see \cite{LS17} for detailed discussion.
Therefore, we will only use the
complement $\not\sim$, which has a substantially smaller
representation: polynomial (resp.~exponential) size if $R$ is given as a DFA
(resp.~an NFA).
Now, that $R$ is not monadic decomposable amounts to the existence of an 
$\omega$-sequence $\sigma = \{v_i\}_{i \in \N}$ of words such that 
$v_i \not\sim v_j$ for each pair $i,j \in \N$.
By applying the pigeonhole principle and K\"onig's lemma, we will first construct a nicer
sequence $\alpha$ (see the top half of Figure~\ref{fig:newstairs}) and then 
by exploiting Ramsey Theorem over
infinite graphs, we will show that there is an even nicer sequence $\alpha'$ (see 
the bottom half of Figure \ref{fig:newstairs}), where the automaton for $\not\sim$ synchronizes its states
in particular points of the computation, no matter which pair of words from
the sequence is being read. Moreover, we prove that one of the synchronizing 
states has a pumping property. This leads to our $\NLOGSPACE$ algorithm as
we can guess
the synchronizing states and verify that there is an accepting run that can be pumped.
This technique was inspired by a technique for proving recurrent 
reachability in regular model checking \cite{anthony-thesis,TL08}.

\OMIT{
\anthony{The following paragraph seems out of place.}
Our techniques are in contrast to the techniques employed in 
\cite{Stearns, Valiant75, LS19}, which relied on characterization of a relation 
$R$ using the language $L_R$. Essentially,
their main technical results have been finding suitable machinery that is able to
decide whether $L_R$ is regular or not. Our technique is different as we are looking
at the relation directly and, in the end, use standard techniques to check
reachability in finite automata.
}

The exponential-time upper bound for the binary case from L\"{o}ding and 
Spinrath \cite{LS19} (which is inspired by the techniques used by Stearns 
\cite{Stearns} and Valiant \cite{Valiant75}) relied on characterization of a 
relation $R$ using the language 
$L_R=\{\textsf{rev}(u)\#v \mid (u,v)\in R\}$ and used a suitable machinery that is 
able to decide whether $L_R$ is regular or not. 
Their result is not easily extensible to $n$-ary relations as 
the encoding of a binary rational relation as a context-free language $L_R$ 
does not generalize to $n$-ary relations.
In Section \ref{sec:general}, we show that proving monadic decomposability for an $n$-ary 
regular relation is $\LOGSPACE$-reducible to testing whether linearly many induced 
binary relations are monadic decomposable. 

We conclude in Section~\ref{sec:conc} with some perspectives from formal
verification and a future research direction.


\section{Preliminaries}
\label{sec:prelim}
A finite alphabet is denoted by $\Sigma$ and the free monoid it generates 
by $\Sigma^*$. That is, $\Sigma^*$ consists of all finite words over $
\Sigma$. The empty word is $\varepsilon$. We denote by $|w|$ the length of word
$w\in\Sigma^*$. We have that $|\varepsilon|=0$.  The word $u\in\Sigma^*$ is a
\emph{prefix} of $w\in\Sigma^*$ if $w=uv$ for some $v\in\Sigma^*$. We denote this
by $u\leq w$. We also write $v=u^{-1}w$, when $u$ is a prefix of $w$, to state that $v$ is the suffix of $w$ 
that is obtained after prefix $u$ is removed. Sometimes we
want to consider a suffix of $w$ after a prefix of particular length is removed
without specifying the actual prefix as defined above. To this end, we define partial function
$\sigma: \Sigma^*\times \mathbb{N}\rightarrow \Sigma^*$ such that $\sigma(w,i)=v$,
where $w=uv$ for some $u\in\Sigma^*$ such that $|u|=i$. In particular, for $u\leq w$,
$\sigma(w,|u|)=u^{-1}w$. Similarly, we define partial function $\tau: \Sigma^*\times\mathbb{N}\rightarrow\Sigma^*$
such that $\tau(w,i)=u$, where $|u|=i$ and $u\leq w$.

In this paper we study relations $R\subseteq \Sigma^*\times\cdots\times\Sigma^*$
with particular structural properties.
Namely, \emph{monadic decomposable} relations that are a finite union of
direct products of regular languages, and \emph{regular} relations
defined by $n$-tape finite automata, where the heads move in synchronized
manner. See, for example, \cite{Sakarovitch09} for more details on such relations.

\begin{definition}
An $n$-ary relation $R\subseteq \Sigma^*\times\cdots\times\Sigma^*$ is a \emph{monadic decomposable} 
relation iff it is of the form 
$\bigcup_{i=1}^{m}(X_{1,i}\times\cdots\times X_{n,i})$, where $m$ is finite and
each $X_{j,i}\subseteq \Sigma^*$ is a regular language.
\end{definition}
As mentioned earlier, this can be intuitively seen as the components of $R$ being independent in some sense.  
Note that in the literature, monadic decomposable relations are sometimes
called \emph{recognizable}. 
The monadic decomposable relations can be defined using multi-tape automata
as is done, e.g., in \cite{CCG06}. The above definition is more suitable
for our considerations.

Let $\pad$ be a fresh symbol not found in $\Sigma$. We use it to pad
words in a relation $R\subseteq\Sigma^* \times\cdots\times\Sigma^*$ in order
for each component to be of the same length. Formally, a tuple $(w_1,\ldots,w_n)$ 
is transformed into $(w_1\pad^{\ell_1},\ldots,w_n\pad^{\ell_n})$, where 
$\ell_i = -|w_i|+\max_{1\leq j\leq n}|w_j|$ for each $i=1,\ldots,n$. We extend this
to the relation $R_\pad$ in the expected way. We also denote $\Sigma\cup\{\pad\}$ by $\Sigma_\pad$.
An {\em $n$-tape automaton over alphabet $\Sigma_\pad$} is a tuple 
$(Q,\rightarrow_{\A},q_0,F)$,
where 
$Q$ is the
finite set of states, $q_0$ is the initial state, $F$ is the set of final
states, and 
$\rightarrow_\A~\subseteq Q\times (\Sigma_\pad)^n \times\,\mathcal{P}(Q)$. 

\begin{definition}
An $n$-ary relation $R \subseteq \Sigma^* \times \cdots \times \Sigma^*$ is \emph{regular} iff 
$R_\pad$ is recognized by some $n$-tape automaton $\A_\pad$ 
over alphabet $\Sigma_\pad$.
\end{definition}

That is, in a regular relation the $n$ heads of the automaton are moving in
synchronized manner and the $n$-tuple of symbols seen 
determines the state transition. Naturally, the
state transition can be deterministic or non-deterministic. 
We say that a regular relation is defined by an NFA if the underlying $n$-tape 
automaton is non-deterministic, otherwise we say that the relation is
defined by a DFA.
Note that in the literature, regular relations are sometimes called 
\emph{synchronous rational} or \emph{automatic} relations.

We recall a useful characterization from \cite{CCG06}.
Consider an $n$-ary regular relation $R \subseteq \Sigma^*\times\cdots \times \Sigma^*$. 
For each $j=1,\ldots,n-1$, let $\sim_j$ be the following induced
equivalence relation:
\begin{align*}
\begin{multlined}
(u_1,\ldots,u_j) \sim_j (v_1,\ldots,v_j) \, := \, \forall (w_{j+1},\ldots,w_n)\in \Sigma^* \times \cdots \times \Sigma^* \text{ we have that} \\
\qquad \qquad (u_1,\ldots,u_j,w_{j+1},\ldots,w_n)\in R \, \Longleftrightarrow \, (v_1,\ldots,v_j,w_{j+1},\ldots,w_n)\in R \text{ and} \\
 \qquad \qquad (w_{j+1},\ldots,w_n,u_1,\ldots,u_j)\in R \, \Longleftrightarrow \, (w_{j+1},\ldots,w_n,v_1,\ldots,v_j)\in R. 
\end{multlined}
\end{align*}

\begin{lemma}[\cite{CCG06}]\label{lm:equivrel}
The $n$-ary regular relation $R$ is monadic decomposable iff  $\sim_j$ has finite index for
each $j = 1,\dots,n-1$. That is, there are finitely many equivalence classes over $\sim_j$.

In other words, $R$ is not monadic decomposable iff for some $j=1,\ldots,n-1$, there is an infinite
sequence $\{u_i\}_{i\geq0}$, where each $u_i$ is a $j$-tuple of words, such that for each $0 \leq i < \ell$ it is the case that 
$u_i \neq u_\ell$ and 
$u_i \not\sim_j u_\ell$.
\end{lemma}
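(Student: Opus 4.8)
The plan is to establish the displayed ``iff'' and then read off the reformulation in terms of an infinite pairwise $\not\sim_j$-sequence, which is routine: over a finite alphabet there are only countably many $j$-tuples, so $\sim_j$ has infinite index precisely when one can enumerate representatives of pairwise distinct $\sim_j$-classes into an $\omega$-sequence $\{u_i\}$, and reflexivity of $\sim_j$ forces $u_i \neq u_\ell$; combining this over $j = 1,\dots,n-1$ with the main equivalence gives the ``not monadic decomposable'' statement. So it remains to prove that $R$ is monadic decomposable iff every $\sim_j$ has finite index.

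For the ``only if'' direction, write $R = \bigcup_{i=1}^m X_{1,i}\times\cdots\times X_{n,i}$ with each $X_{k,i}$ regular. To a $j$-tuple $\bar u = (u_1,\dots,u_j)$ I would associate its \emph{profile}, the pair of subsets $P(\bar u) = \{\, i : u_k \in X_{k,i}\ \text{for all}\ k \le j \,\}$ and $Q(\bar u) = \{\, i : u_k \in X_{n-j+k,i}\ \text{for all}\ k \le j \,\}$, recording which products $\bar u$ ``fits into'' as a prefix, resp.\ as a suffix. A short calculation shows that for any $(w_{j+1},\dots,w_n)$, membership of $(u_1,\dots,u_j,w_{j+1},\dots,w_n)$ in $R$ depends only on $P(\bar u)$ and $(w_{j+1},\dots,w_n)$, and membership of $(w_{j+1},\dots,w_n,u_1,\dots,u_j)$ only on $Q(\bar u)$ and $(w_{j+1},\dots,w_n)$; hence equal profiles imply $\sim_j$-equivalence and $\sim_j$ has at most $4^m$ classes.

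For the ``if'' direction I would induct on the arity $n$, the case $n=1$ being trivial (a regular language is already of the required form, and there are no $\sim_j$ to worry about). For $n \ge 2$: observe that $\sim_1$ is exactly the intersection of the equivalences ``$u,v$ are interchangeable in coordinate $1$'' and ``$u,v$ are interchangeable in coordinate $n$''; since an equivalence coarser than a finite-index one is again of finite index, both of these have finite index. Let $A_1,\dots,A_p$ be the classes of the first one (these are regular, being classes of a first-order-definable, hence regular, equivalence), and for each $l$ let $R_l = \{(w_2,\dots,w_n) : (a,w_2,\dots,w_n) \in R\}$ for any $a \in A_l$; this is well defined, is an $(n-1)$-ary regular relation, and $R = \bigcup_l \{(a,w_2,\dots,w_n) : a \in A_l,\ (w_2,\dots,w_n) \in R_l\}$. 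Thus a monadic decomposition of each $R_l$ yields one of $R$ (prefix each product with $A_l$), and by the induction hypothesis it suffices to show every induced equivalence $\sim^l_i$ of $R_l$ ($1 \le i \le n-2$) has finite index. The key claim is that, fixing a representative $a_0 \in A_l$, the pair consisting of the $\sim_{i+1}$-class of $(a_0,\bar u)$ and the $\sim_i$-class of $\bar u$ --- which lies in a finite set --- determines the $\sim^l_i$-class of the $i$-tuple $\bar u$: the ``forward'' half of $\sim^l_i$ is obtained from the forward half of $\sim_{i+1}$ applied to $(a_0,\bar u)$ together with the fact that all of $A_l$ is interchangeable in coordinate $1$, and the ``swapped'' half of $\sim^l_i$ is obtained from the swapped half of $\sim_i$ of $R$ by instantiating its quantified context with $a_0$ prepended.

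The step I expect to be the main obstacle is precisely this last claim. The induced equivalences of the section $R_l$ impose ``swap'' conditions on $(n-1)$-tuples that do not literally coincide with any single $\sim_j$ of $R$, and one must pick the right combination --- here $\sim_i$ and $\sim_{i+1}$, used for its two halves respectively --- while keeping track of exactly where the swapped block of coordinates sits and ensuring the finiteness bookkeeping for the index of $\sim^l_i$ closes up. The remaining ingredients (regularity of sections and of the various equivalence relations, via closure of regular relations under first-order operations, and the elementary counting) are routine.
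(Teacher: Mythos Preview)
The paper does not prove this lemma; it is quoted from \cite{CCG06} (Carton, Choffrut, Grigorieff) and used as a black box. So there is no ``paper's own proof'' to compare against, and your proposal stands on its own merits.

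Your argument is correct. The only-if direction via profiles $(P(\bar u),Q(\bar u))$ is clean and gives the explicit bound $4^m$. The if direction by induction on arity is sound: the key point you flag as the obstacle --- that the $\sim^l_i$-class of $\bar u$ in the section $R_l$ is determined by the $\sim_{i+1}$-class of $(a_0,\bar u)$ together with the $\sim_i$-class of $\bar u$ in $R$ --- goes through exactly as you sketch. The forward half of $\sim^l_i$ unpacks to $(a_0,\bar u,\bar w)\in R \Leftrightarrow (a_0,\bar v,\bar w)\in R$, which is the forward half of $\sim_{i+1}$ at $(a_0,\bar u),(a_0,\bar v)$; the swapped half unpacks to $(a_0,\bar w,\bar u)\in R \Leftrightarrow (a_0,\bar w,\bar v)\in R$, which is the swapped half of $\sim_i$ at $\bar u,\bar v$ with the context specialized to begin with $a_0$. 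Both $\sim_i$ and $\sim_{i+1}$ lie in the range $1\le j\le n-1$ covered by the hypothesis, so the index bound closes. The remaining regularity claims (that the $\equiv_1$-classes $A_l$ and the sections $R_l$ are regular) follow from first-order closure of regular relations, which the paper explicitly invokes elsewhere.

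One cosmetic remark: your mention that the forward half also uses ``all of $A_l$ is interchangeable in coordinate~1'' is not actually needed there --- you work with the fixed representative $a_0$ throughout and only use the $A_l$-interchangeability once, when showing $R=\bigcup_l A_l\times R_l$. This does not affect correctness.
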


In Section~\ref{sec:binary}, we focus on binary relations for which we simplify the
notation as there is only one possible value of $j$. We write $\sim$ instead
of $\sim_j$ and $R^{\not\sim}$ for the binary regular relation
\begin{multline*}
R^{\not\sim}(w,w') \, := \, \exists u \,\big((R(w,u) \wedge \neg R(w',u)) \, \vee \, (\neg R(w,u) \wedge R(w',u)) \, \vee \, \\ 
(R(u,w) \wedge \neg R(u,w')) \, \vee \, (\neg R(u,w) \wedge R(u,w'))\big).
\end{multline*}
That is, $R^{\not\sim}$ consists of all words $w,w'\in\Sigma^*$ for which there exists a word $u\in\Sigma^*$ such that one of $R(w,u)$ and $R(w',u)$ is accepted while the other is not, or one of $R(u,w)$ and $R(u,w')$ is accepted while the other is not.

We assume that the reader is familiar with complexity classes and logarithmic space
reductions via logarithmic space transducers; see for example \cite{Sipser2006}.



\section{Hardness of deciding monadic decomposability of regular relations}
\label{sec:hardness}
In this section, we consider binary regular relations given by NFA and
provide a $\PSPACE$ lower bound for deciding if
such a relation is monadic decomposable. Then, we prove that the same problem for
DFA is $\NLOGSPACE$-hard.

\begin{lemma}\label{thm:lb}
The problem of deciding whether a binary regular relation given by an NFA is monadic decomposable is $\PSPACE$-hard.
\end{lemma}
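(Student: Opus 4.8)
The plan is to reduce from a canonical PSPACE-complete problem about NFAs. The natural candidate is the \emph{universality problem} for NFAs over a fixed alphabet: given an NFA $\mathcal{N}$ over $\Sigma$, decide whether $L(\mathcal{N}) = \Sigma^*$. This is PSPACE-complete, and crucially it remains PSPACE-hard even when restricted to instances where the answer is expected to be "yes" versus a highly structured "no", which we will exploit. The idea is to cook up, from $\mathcal{N}$, a binary regular relation $R_{\mathcal{N}}$ (given by an NFA, i.e.\ by a $2$-tape automaton) such that $R_{\mathcal{N}}$ is monadic decomposable if and only if $L(\mathcal{N}) = \Sigma^*$. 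By Lemma~\ref{lm:equivrel}, $R_{\mathcal{N}}$ is monadic decomposable iff the induced equivalence $\sim$ has finite index, so we want: $L(\mathcal{N})=\Sigma^*$ forces $\sim$ to have boundedly many classes, whereas any word $w\notin L(\mathcal{N})$ lets us build an infinite $\sim$-antichain.

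The concrete construction I would try: work over alphabet $\Sigma \cup \{\marker\}$ (or just reuse $\Sigma_\pad$-style padding) and let $R_{\mathcal{N}}$ relate a pair $(x,y)$ roughly when "$x$ encodes a word $w$ rejected by $\mathcal{N}$, and $y$ is a witness tied to $w$ in a length-sensitive, non-monadic way" — for instance $R_{\mathcal{N}} = \{(w\marker^{|w|}, v) : w\notin L(\mathcal{N}),\ v \in \Sigma^{|w|}\}$ together with its transpose, or more simply the "diagonal restricted to the complement" $\{(w,w) : w \notin L(\mathcal{N})\}$. The first observation is that such an $R_{\mathcal{N}}$ is a regular relation whose $2$-tape NFA has size polynomial in $|\mathcal{N}|$: we just run $\mathcal{N}$ on the first tape (to guess a rejecting run is impossible, so instead we complement — this is the subtle point; see below), check the synchronization/equality constraint symbol-by-symbol on the two tapes, and handle $\pad$-padding in the standard way. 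The second observation is the equivalence: if $L(\mathcal{N})=\Sigma^*$ then the "complement" component is empty, $R_{\mathcal{N}}$ collapses to something trivial (empty or a single product), hence monadic decomposable with index $\le$ a small constant. If instead there is some $w\notin L(\mathcal{N})$, then there are infinitely many such $w$ is \emph{not} guaranteed — so I must be careful: a single missing word need not break monadicity of the diagonal. To force infinitely many witnesses, I would pad: take $R_{\mathcal{N}} = \{(w u, w u) : w\notin L(\mathcal{N}),\ u\in\Sigma^*\}$, i.e.\ close the bad set upward under arbitrary suffixes. Then $L(\mathcal{N})=\Sigma^*$ still gives the empty relation, while a single $w\notin L(\mathcal{N})$ yields $\{(wu,wu):u\in\Sigma^*\}\subseteq R_{\mathcal N}$, an infinite diagonal, and one checks $wu_i \not\sim wu_j$ for distinct $u_i,u_j$, so $\sim$ has infinite index and $R_{\mathcal{N}}$ is not monadic decomposable.

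The main obstacle is the \textbf{complementation direction}: building a polynomial-size $2$-tape \emph{NFA} that accepts pairs whose first component lies in $\overline{L(\mathcal{N})}\cdot\Sigma^*$ looks like it requires determinizing $\mathcal{N}$, which is exponential and would kill the reduction. I expect the fix is to avoid complementing $\mathcal{N}$ at all: instead reduce from a problem already phrased positively, such as \emph{NFA non-universality} directly, or better, from a PSPACE-complete problem about NFA \emph{intersection emptiness} / \emph{reachability in a product of NFAs} (e.g.\ the DFA-intersection-emptiness problem, PSPACE-complete, restated so that the target language whose emptiness we test is itself NFA-recognizable in polynomial size). Alternatively, encode rejecting runs explicitly on the second tape: let the first tape carry $w$ and the second tape carry a \emph{purported run} of $\mathcal{N}$ on $w$; the $2$-tape NFA reads both synchronously, accepts when the run on the second tape is a \emph{valid accepting} run, and then $R_{\mathcal N}$ is monadic decomposable iff every $w$ has such a run iff $L(\mathcal N)=\Sigma^*$ — but this again needs the "for all runs, not accepting" quantifier, i.e.\ complementation. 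The cleanest route, which I would pursue first, is therefore to reduce from the complement problem whose positive instances are directly expressible: start from a PSPACE-hard language $L$ given by a polynomial-space Turing machine, and build $R_{\mathcal N}$ so that bad pairs encode \emph{halting computation histories}, using the standard fact that the set of (strings encoding) valid accepting histories of a fixed poly-space TM is recognized by a polynomial-size NFA after a suitable transposition/interleaving — then no determinization is needed, and the monadic-decomposability test plays exactly the role of the universal quantifier over inputs. I would organize the write-up as: (i) fix the source PSPACE-complete problem and its positive phrasing; (ii) define $R_{\mathcal N}$ and bound its $2$-tape NFA size polynomially; (iii) show "yes instance $\Rightarrow$ $R_{\mathcal N}$ essentially empty $\Rightarrow$ monadic decomposable"; (iv) show "no instance $\Rightarrow$ an infinite diagonal $\subseteq R_{\mathcal N}$ $\Rightarrow$ infinite index of $\sim$ $\Rightarrow$ not monadic decomposable" via Lemma~\ref{lm:equivrel}; and (v) note the reduction is computable in logarithmic space, which suffices for PSPACE-hardness.
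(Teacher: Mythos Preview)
You correctly identify NFA universality as the source problem and, crucially, you pinpoint the real obstacle: any construction of the form ``diagonal restricted to $\overline{L(\mathcal{N})}\cdot\Sigma^*$'' needs a poly-size NFA for the complement, which is not available. Where your proposal stalls is in \emph{resolving} this obstacle. None of the alternatives you list are carried through: reducing from non-universality ``directly'' still needs the complement to define the relation; the DFA-intersection-emptiness route would require a poly-size automaton tracking $\bigcap_i L(A_i)$, which is again the (exponential) product construction --- your parenthetical ``restated so that the target language \ldots\ is itself NFA-recognizable in polynomial size'' is precisely the wish that does not come for free; and the TM-history route you finally settle on is left too vague to assess (you do not say what the relation is, nor why its monadic decomposability coincides with acceptance). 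So as written there is a genuine gap: the missing idea is a construction that uses $L(\mathcal{N})$ \emph{positively}.

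The paper's trick does exactly this while still reducing from NFA universality. Over $\Sigma\cup\{\sep\}$ it sets
\[
R \;=\; \underbrace{\{(u,u): u\in(\Sigma\cup\{\sep\})^*\}}_{R_1}\ \cup\ \underbrace{(L\cdot\{\sep\})^*\times(\Sigma^*\cdot\{\sep\})^*}_{R_2},
\]
so $L=L(\mathcal{N})$ appears only positively and a $2$-tape NFA for $R$ is built in $\LOGSPACE$. If $L=\Sigma^*$ then $R_1\subseteq R_2$ and $R=R_2$ is a single product, hence recognizable. Conversely, if $R=\bigcup_{i=1}^n A_i\times B_i$, then for an arbitrary $w\in\Sigma^*$ the $n{+}1$ diagonal pairs $((w\sep)^j,(w\sep)^j)\in R_1$ place, by pigeonhole, two of them in the same $A_i\times B_i$; this forces an off-diagonal $((w\sep)^j,(w\sep)^k)$ with $j\neq k$ into $R$, and since it is not in $R_1$ it lies in $R_2$, whence $(w\sep)^j\in (L\cdot\{\sep\})^*$ and $w\in L$. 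The conceptual shift you were missing: instead of \emph{restricting} the diagonal to the bad words (which needs the complement), keep the \emph{full} diagonal and \emph{add} a product that absorbs it exactly when $L=\Sigma^*$; monadic decomposability then performs the universal quantification over $w$ for you via pigeonhole.
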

\begin{proof}
We give a logarithmic space reduction from the universality problem for NFA, which is 
$\PSPACE$-hard \cite{MS72}.
Recall that in this problem, we are asked 
to decide 
whether $L(\A) = \Sigma^*$
given an NFA $\A$ over 
$\Sigma$.

Let $\mathcal{A}$ be an NFA over alphabet $\Sigma$, and let $\{\sep\}$ be a fresh symbol that we will use as a separator symbol. We assume that $\sep\neq\pad$.
We construct relation $R=R_1\cup R_2$ using the language $L$ of $\A$, where
\begin{align*}
R_1 = \{(u,u)\mid u \in (\Sigma \cup \{\sep\})^*\} \qquad \text{and} \qquad
R_2 =  (L \cdot \{\sep\} )^* \times (\Sigma^* \cdot \{\sep\} )^*.
\end{align*}
Intuitively, $R_1$ contains all pairs $(w_1,w_2)$ such that
$w_1=w_2=u_0\sep u_1\sep \cdots \sep u_n \sep$, where $u_i\in\Sigma^*$, and $R_2$
contains all pairs $(w_1,w_2)$ such that
$w_1=v_0\sep v_1\sep \cdots\sep v_m \sep$, where $v_i\in L$, and 
$w_2=u'_0\sep u'_1\sep \cdots \sep u'_n \sep$, where $u'_i\in\Sigma^*$.
It is easy to construct an NFA that recognizes $R$ in $\LOGSPACE$. 
Next we show that $L = \Sigma^*$ iff $R$ is monadic decomposable. 

Assume first that $L = \Sigma^*$. Then 
$R_1 \subseteq R_2$, and thus
$R = (\Sigma^* \cdot \{\sep\} )^* \times (\Sigma^* \cdot \{\sep\} )^*$
which has a trivial monadic decomposition. 

For the other direction, assume that $R$ is monadic decomposable, i.e., $R = \bigcup_{i=1}^n (A_i \times B_i)$ 
for some regular languages $A_i$, $B_i$. Let $w \in \Sigma^*$. We show that
$w \in L$ as well. Consider a
set $\{((w\sep)^i,(w\sep)^i)\mid i=1,\ldots, n+1\}\subseteq R_1\subseteq R$.
By the pigeonhole principle, there are two elements $((w\sep)^j,(w\sep)^j)$ and
$((w\sep)^k,(w\sep)^k)$
that belong to the same component of $\bigcup_{i=1}^n (A_i \times B_i)$, say to $A_1\times B_1$. Therefore,
$(w\sep)^{j} \in A_1$ and $(w\sep)^{k} \in B_1$, and hence
their direct product, $((w\sep)^{j},(w\sep)^k)$, is in $A_1\times B_1\subseteq R.$
Recall that $R=R_1\cup R_2$. 
Clearly, $((w\sep)^{j},(w\sep)^k)\notin R_1$ as the lengths of the two words are
different. It follows that $((w\sep)^{j},(w\sep)^k) \in R_2$ and hence 
$(w\sep)^{j} \in (L \cdot \{\sep\})^*$. This implies that $w \in L$. 
\end{proof}

\begin{lemma}\label{lem:lbdfa}
The problem of deciding whether a binary regular relation given by a DFA is
monadic decomposable is $\NLOGSPACE$-hard.
\end{lemma}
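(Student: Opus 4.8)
The plan is to mimic the structure of the proof of Lemma~\ref{thm:lb}, but reduce from a problem that is $\NLOGSPACE$-hard and for which the witnessing automaton can be taken deterministic. The natural candidate is the (non)reachability / emptiness problem for a DFA, or more precisely $\NLOGSPACE$-hardness via graph (non)reachability: deciding whether $L(\A)=\emptyset$ for a DFA $\A$ is $\NLOGSPACE$-complete, and by padding the alphabet we may assume $L(\A)$, when nonempty, is "large enough" in a sense to be fixed below. So let $\A$ be a DFA over $\Sigma$; I want to build, in $\LOGSPACE$, a DFA-presentable binary regular relation $R$ such that $L(\A)=\emptyset$ (or $L(\A)=\Sigma^*$, whichever is cleaner) iff $R$ is monadic decomposable.

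First I would set up the relation in analogy with the NFA case: take a fresh separator $\sep\neq\pad$ and let
\begin{align*}
R_1 = \{(u,u)\mid u\in(\Sigma\cup\{\sep\})^*\},\qquad
R_2 = (L\cdot\{\sep\})^*\times(\Sigma^*\cdot\{\sep\})^*,
\end{align*}
and $R=R_1\cup R_2$. The issue is that $R$ as written need not be DFA-presentable even when $L$ is, because the synchronous product automaton must nondeterministically decide, on a pair $(w_1,w_2)$, whether it is checking membership in $R_1$ or in $R_2$. To get determinism I would exploit the synchronous reading: on input $(w_1,w_2)$ the automaton reads both tapes simultaneously, so it can keep one component of its state tracking the "$w_1=w_2$ so far?" predicate (this is deterministic: compare the two current letters) and another component running the DFA for $L$ on the first tape together with a DFA for $(\Sigma^*\sep)^*$ on the second. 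At the end it accepts iff either the equality check succeeded \emph{and} $w_1\in(\Sigma\cup\{\sep\})^*\sep$-shaped, or the $L$-block / $\Sigma^*$-block checks both succeeded. Because all of these are deterministic conditions computed in parallel on the synchronously-read pair, the product is a DFA of size polynomial in $\|\A\|$, and the construction is clearly $\LOGSPACE$.

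Next comes the equivalence. If $L=\Sigma^*$ then $R_1\subseteq R_2$ and $R=(\Sigma^*\sep)^*\times(\Sigma^*\sep)^*$ is monadically decomposable, exactly as before. Conversely, if $R$ is monadically decomposable, the pigeonhole argument on $\{((w\sep)^i,(w\sep)^i):i=1,\dots,n+1\}\subseteq R_1$ for an arbitrary $w\in\Sigma^*$ produces $j\neq k$ with $((w\sep)^j,(w\sep)^k)\in R$; since the two sides have different length this pair lies in $R_2$, forcing $(w\sep)^j\in(L\sep)^*$ and hence $w\in L$. Thus $R$ monadically decomposable $\iff L=\Sigma^*$. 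Since DFA universality is $\NLOGSPACE$-hard (it is the complement of DFA emptiness, which is $\NLOGSPACE$-complete, and $\NLOGSPACE$ is closed under complement by Immerman–Szelepcsényi), the lemma follows.

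The main obstacle I anticipate is precisely the determinism point: showing that $R$ (or a mildly modified $R$) genuinely has a DFA presentation of polynomial size. The "run everything in parallel on the synchronously read pair and OR the acceptance conditions" idea should work, but one must be careful that the two disjuncts $R_1$ and $R_2$ can be distinguished deterministically — here that is fine because $R_1$ demands $w_1=w_2$ letter-by-letter, a condition the DFA can track with a single bit, so there is no genuine nondeterministic choice to resolve. If that turns out to be too delicate, a fallback is to adjust the two pieces so that they are syntactically disjoint (e.g. tag $R_1$-pairs and $R_2$-pairs by a leading marker letter), which trivially keeps determinism while only changing the pigeonhole bookkeeping by a constant.
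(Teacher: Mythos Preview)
Your approach is correct and genuinely different from the paper's. The paper does \emph{not} reuse the construction from Lemma~\ref{thm:lb}; instead it reduces directly from reachability in a directed acyclic graph. Given $G$ with source $s$, target $t$, and out-degree bound $d$, the paper builds a two-tape DFA whose states are the vertices (plus a sink), whose initial state is $s$ and whose only final state is $t$, with transitions $(v,(a_i,a_i),v_i)$ for the $i$-th out-edge of $v$, a self-loop $(t,(a_{d+1},a_{d+1}),t)$, and all mismatched pairs $(a_i,a_j)$, $i\neq j$, sent to the sink. The recognized relation is then a subset of the diagonal; it is finite (hence trivially monadic decomposable) iff $t$ is unreachable, and otherwise it is an infinite diagonal and not decomposable. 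This is shorter and avoids any product construction or appeal to Immerman--Szelepcs\'enyi.

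Your route---reduce from DFA universality and argue that the $R_1\cup R_2$ automaton of Lemma~\ref{thm:lb} can be made deterministic via the standard product---also works, and has the pleasant feature of unifying the two lower bounds under one construction. Two small points to tighten. First, the claim ``$L=\Sigma^*$ implies $R_1\subseteq R_2$'' fails for $R_1=\{(u,u)\mid u\in(\Sigma\cup\{\sep\})^*\}$ as written, since e.g.\ $(a,a)\in R_1\setminus R_2$; you need $R_1=\{(u,u)\mid u\in(\Sigma^*\sep)^*\}$, which is what the pigeonhole argument actually uses and is still recognized by a constant-size DFA. Second, your determinism argument is right but can be stated more simply: $R_1$ and $R_2$ are each recognized by polynomial-size DFAs over $\Sigma_\pad\times\Sigma_\pad$, and the union of two DFA-recognizable synchronous relations is DFA-recognizable by the product construction, with the product computable in $\LOGSPACE$. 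No need to reason about ``nondeterministically deciding'' which disjunct to check.
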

\begin{proof}
We prove the hardness by a logarithmic space reduction from the reachability problem for directed
acyclic graphs, which is an $\NLOGSPACE$-hard problem \cite{Jones75}. Let $G$ be a
directed acyclic graph, $s,t$ two vertices of $G$, and we are asked whether $t$ is 
reachable from $s$. Let $d$ be the degree of the graph.

We construct DFA $\A$ out of $G$. The states are the vertices of $G$ together with
new sink state $\bot$. The initial state is $s$ and the final state is $t$.
The alphabet is $\Sigma=\{a_1,\ldots,a_d,a_{d+1}\}$.
Let $s_1$ be a vertex of outdegree $d'\leq d$ that has edges to $t_1,\ldots t_{d'}$.
For each edge from $s_1$ to $t_i$, we add a transition $(s_1,(a_i,a_i),s_i)$.
Finally, we add a self-loop $(t,(a_{d+1},a_{d+1}),t)$ and transitions
$(s',(a_i,a_j),\bot)$ for every state $s'$ and all $i\neq j$.

Observe that $\A$ is a DFA by our choice of labels on transitions. Moreover, the
relation defined by $\A$ consists of words $(u,u)$ for some $u\in\Sigma^*$.
The relation has finitely many different elements if and only if $t$ is not
reachable from $s$. Recall that all relations with finitely many elements are
monadic decomposable. On the other hand, if $t$ is reachable, then the relation
is not monadic decomposable, which completes the proof.
\end{proof}
\section{Deciding monadic decomposability of binary regular relations}
\label{sec:binary}
In this section we prove our main technical result.

\begin{lemma} \label{theo:ptime} 
There is an $\NLOGSPACE$ algorithm that takes as input an NFA for $R^{\not\sim}$, where $R$ is a binary regular relation, 
and decides whether $R$ is monadic decomposable. 
\end{lemma}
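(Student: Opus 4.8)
The plan is to characterise non-monadic-decomposability of $R$ by the presence of a small ``lasso'' pattern inside the given NFA $\mathcal{B}=(Q,\to,q_0,F)$ for $R^{\not\sim}$, and then to detect that pattern with a constant number of reachability queries. We may assume $\mathcal{B}$ reads the alphabet $\Sigma_\pad\times\Sigma_\pad$ and recognises exactly the well-formed padded encodings of $R^{\not\sim}$ (with padding occurring only as a suffix of the shorter coordinate). By Lemma~\ref{lm:equivrel} specialised to $n=2$ (so there is a single equivalence $\sim$, and $R^{\not\sim}(w,w')\Leftrightarrow w\not\sim w'$), the relation $R$ is \emph{not} monadic decomposable precisely when there is an infinite sequence $v_0,v_1,\dots\in\Sigma^*$ of pairwise distinct words with $v_i\not\sim v_j$ for all $i\neq j$; equivalently, for all $i<j$ with $|v_i|<|v_j|$ the NFA $\mathcal{B}$ accepts $(v_i\,\pad^{|v_j|-|v_i|},\,v_j)$.

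First I would isolate the pattern. Call a triple $(p,q,y)$ with $p,q\in Q$ and $y\in\Sigma^+$ a \emph{lasso} if: $(a)$ $p$ is reachable from $q_0$ along transitions on diagonal letters $(a,a)$, $a\in\Sigma$ (reading some $x\in\Sigma^*$); $(b)$ there is a run from $p$ back to $p$ reading the diagonal word $(y,y)$; $(c)$ there are runs from $p$ to $q$ and from $q$ back to $q$, each reading the single padding block $(\pad^{|y|},y)$ with the \emph{same} word $y$ as in $(b)$; and $(d)$ $q\in F$. The easy direction of the characterisation is that a lasso forces non-decomposability: put $w_k:=x\,y^{k+1}$ for $k\ge0$. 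For $k<l$ the words $w_k,w_l$ share the prefix $x\,y^{k+1}$ and $|w_l|-|w_k|=(l-k)|y|$, so $(w_k\,\pad^{(l-k)|y|},\,w_l)$ is accepted by $\mathcal{B}$ via the run that takes the diagonal $q_0$--$p$ path of $(a)$, iterates the diagonal loop of $(b)$ exactly $k+1$ times, then reads $l-k$ padding blocks using $(c)$ (one $p$-to-$q$ block followed by $l-k-1$ self-loops at $q$), and halts in $q\in F$; hence $w_k\not\sim w_l$. As $|y|\ge1$ the $w_k$ are pairwise distinct, so by Lemma~\ref{lm:equivrel} $R$ is not monadic decomposable.

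The converse — that every non-decomposable $R$ admits a lasso — is the main obstacle and carries the combinatorial weight. From an infinite pairwise-distinct, pairwise-$\not\sim$ sequence I would first pass, by the pigeonhole principle, to a subsequence of strictly increasing lengths, then apply K\"onig's lemma to the prefix tree of these words to extract a limit word $W$, and then pigeonhole once more to a subsequence $u_0,u_1,\dots$ that either lies entirely on $W$ (hence is nested) or leaves $W$ at strictly increasing positions $d_0<d_1<\cdots$ with each $d_{m+1}$ beyond $|u_m|$; in either case, for $m<m'$ the accepting run of $\mathcal{B}$ on $(u_m\,\pad^{|u_{m'}|-|u_m|},\,u_{m'})$ factors into a first part whose label depends only on $m$ (a diagonal word read off $W$, possibly followed by the part of $u_m$ that diverges from $W$) and that ends at position $|u_m|$, and a padding part. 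I would then colour the pairs $\{m<m'\}$ by the finitely many possible values of the pair (state of the run at position $|u_m|$, accepting state at the end), and invoke the infinite Ramsey theorem to obtain an infinite monochromatic set $S$, so that along $S$ every accepting run synchronises at one common state $p$ at its switching point and ends in one common final state, no matter which pair of words is read. A final round of pigeonhole/pumping inside $S$ — exploiting that the first-part labels are nested prefixes of $W$ and that the lengths $|u_m|$ and $|u_{m'}|-|u_m|$ are unbounded — collapses the first parts into a genuine diagonal self-loop at $p$ and exposes a one-block padding cycle through a final state, yielding a lasso. I expect the delicate points to be making the first part genuinely diagonal and, above all, forcing one and the same word $y$ to drive both the diagonal loop and the padding block, all while accounting for the nondeterminism of $\mathcal{B}$ (so that some of these steps must track reachable \emph{sets} of states rather than single states).

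Finally, a lasso can be found in $\NLOGSPACE$. The algorithm guesses $p,q\in Q$, checks $q\in F$, and checks that $p$ is reachable from $q_0$ in the subgraph of $\mathcal{B}$ that retains only transitions on diagonal letters $(a,a)$ — a plain graph-reachability query. To verify $(b)$ and $(c)$ with a common $y$, it builds the product graph over the alphabet $\Sigma$ whose vertices are triples of states of $\mathcal{B}$ and whose $a$-labelled edge from $(s_1,s_2,s_3)$ to $(s_1',s_2',s_3')$ exists iff $s_1\xrightarrow{(a,a)}s_1'$, $s_2\xrightarrow{(\pad,a)}s_2'$ and $s_3\xrightarrow{(\pad,a)}s_3'$ in $\mathcal{B}$; a path of length at least one from $(p,p,q)$ to $(p,q,q)$ in this graph spells a word $y\in\Sigma^+$ that simultaneously witnesses a run from $p$ to $p$ on $(y,y)$, a run from $p$ to $q$ on $(\pad^{|y|},y)$, and a run from $q$ to $q$ on $(\pad^{|y|},y)$. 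This product graph has $|Q|^3$ vertices, hence is of size polynomial in $\|\mathcal{B}\|$, and each of the above is a reachability query (or reachability from a guessed successor) on a graph of that size, solvable in $\NLOGSPACE$; composing the constantly many such queries keeps the overall procedure in $\NLOGSPACE$.
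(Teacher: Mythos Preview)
Your lasso characterisation is too restrictive, and the hard direction (``not monadic decomposable $\Rightarrow$ lasso'') fails. Take $\Sigma=\{a,b\}$ and $R=\{(a^nb,\,a^nb):n\ge0\}$. This is regular, and its $\sim$-classes are the singletons $\{a^nb\}$ together with the single class $\Sigma^*\setminus a^*b$, so $R$ is not monadic decomposable. Yet no lasso of your shape exists: the words $w_k=xy^{k+1}$ all lie on the single infinite branch $xy^\omega$, and for any $y\in\Sigma^+$ either $y\in a^+$ (then every $w_k$ ends in $a$, so $w_k\notin a^*b$ and all $w_k$ are $\sim$-equivalent) or $y$ contains a $b$ (then $w_k$ contains at least two $b$'s for $k\ge1$, so again $w_1\sim w_2$). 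In either case your lasso cannot produce $(w_k,w_l)\in R^{\not\sim}$ for all $k<l$. You correctly sensed that ``making the first part genuinely diagonal'' is delicate; in fact it is impossible in general. In the branching case your $u_m=a^mb$ leaves the spine $W=a^\omega$ at position $m$, so every accepting run on $(u_m,u_{m'})$ contains the non-diagonal letter $(b,a)$ at position $m$, and no amount of pumping on the diagonal prefix removes it.

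This is exactly what the paper's characterisation (Lemma~\ref{theo:main}) accounts for: it uses \emph{four} states $q,q',p,r$, and the segment $q'\to q$ is labelled by a pair $(y_i^{-1}x_{i+1},\,y_i^{-1}y_{i+1})$ of \emph{distinct} equal-length words, not a diagonal $(y,y)$. The witnessing sequence is $u_i=\delta_0\cdots\delta_{i-1}\gamma_i$ with $\gamma_i\neq\delta_i$ allowed --- precisely the ``branching off the spine'' that your diagonal loop forbids (in the example above, $\delta_i=a$ and $\gamma_i=b$). Correspondingly, the padding part needs two separate targets: a final state $p$ reached on $(\varepsilon,\gamma)$ and a recurrent state $r$ reached on $(\varepsilon,\delta)$, rather than a single $q$ hit by the same $y$. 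Once the characterisation is set up with these extra degrees of freedom, the $\NLOGSPACE$ detection is, as you say, a constant-size product of copies of the NFA; but two of those tracks must be allowed to read a pair $(w,v)$ with $w\neq v$ and $|w|=|v|$, not the diagonal pair your product graph enforces.
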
  

We start by defining some notation. 
We assume any binary regular relation $R^{\not\sim}$ to be given as an NFA with 
set of states $Q$. The {\em $R^{\not\sim}$-type} of a pair $(w_1,w_2)$ of words
over $\Sigma$ is an element of the {\em transition monoid}. Recall that the transition
monoid transforms any given state $q \in Q$ to a set $Q' \subseteq Q$ of states 
when reading $(w_1,w_2)$. We denote this by $R^{\not\sim}_{w_1,w_2}(q)$ for each
$q\in Q$. We write ${\sf types}({R^{\not\sim}})$ for the set of all $R^{\not\sim}$-types. 

Consider an infinite sequence $\{w_i\}_{i \geq 0}$ of words 
over $\Sigma$ as defined in Lemma~\ref{lm:equivrel}.
Additionally, we assume that the words in the sequence are of strictly increasing
length and that for each $i>0$ the words $w_i$ and $w_{i+1}$ have a common prefix of length $|w_{i-1}|$.
That is, $w_i$ can be written as $\beta_0\cdots\beta_{i-1}\alpha_i$,
where each $\beta_j$ and $\alpha_i$ is a non-empty word.
To simplify notation, we denote $\rho(w_i)=\beta_0\cdots\beta_i$. 
That is, $\rho(w_i)$ is of length $|w_i|$ and is a prefix of $w_j$, for each $0 \leq i < j$. 
We will show how to construct such sequence in Proposition~\ref{prop:char}.
The words $w_i$, $w_j$ and $w_k$ are illustrated in the top of Figure~\ref{fig:Cij}.

With each pair $(i,j)$, where $i < j$, we associate the following quinary tuple over ${\sf types}(R^{\not\sim})$: 
\begin{align*}
\mathfrak{C}_{i,j} \ = \ \big(\, R^{\not\sim}_{w_i,\rho(w_i)}, \, R^{\not\sim}_{\rho(w_i),\rho(w_i)},\,R^{\not\sim}_{\sigma(w_j,|w_i|),\sigma(\rho(w_j),|w_i|)},\, R^{\not\sim}_{\varepsilon,\sigma(w_j,|w_i|)},\, R^{\not\sim}_{\varepsilon,\sigma(\rho(w_j),|w_i|)} \,\big).
\end{align*}
Intuitively, the first component corresponds to the computation of
$(\beta_0\cdots\beta_{i-1}\alpha_i,\beta_0\cdots\beta_{i-1}\beta_i)$, the second
to $(\beta_0\cdots\beta_{i-1}\beta_i,\beta_0\cdots\beta_{i-1}\beta_i)$ needed in
order to compute the third component, $(\beta_{i+1}\cdots\beta_{j-1}\alpha_j,\beta_{i+1}\cdots\beta_{j-1}\beta_j)$.
The final two components are used to compute the set of states reachable after the
whole word in the first component is read. That is
$(\pad^{|\beta_{i+1}\cdots\beta_{j-1}\alpha_j|},\beta_{i+1}\cdots\beta_{j-1}\alpha_j)$ and 
$(\pad^{|\beta_{i+1}\cdots\beta_{j-1}\beta_j|},\beta_{i+1}\cdots\beta_{j-1}\beta_j)$. 
See Figure~\ref{fig:Cij} for a pictorial depiction.

\begin{figure}[htb]
\centering
\begin{tikzpicture}

\draw (-0.05,0) rectangle (1.8,-0.25);
\draw (-0.05,-0.30) rectangle (3.7,-0.55); 
\draw (4.2,-0.6) -- (-0.05,-0.6) -- (-0.05,-0.85) -- (4.2,-0.85);
\node at (-0.35,-0.125) {\scriptsize $w_i$}; 
\node at (-0.35,-0.425) {\scriptsize $w_j$}; 
\node at (-0.35,-0.725) {\scriptsize $w_k$};

\node[right] at (-0.05,-0.125) {\scriptsize $\beta_0\cdots\beta_{i-1}\alpha_i$};
\node[right] at (-0.05,-0.425) {\scriptsize $\beta_0\cdots\beta_{i-1}\beta_i\beta_{i+1}\cdots\beta_{j-1}\alpha_j$};
\node[right] at (-0.05,-0.725) {\scriptsize $\beta_0\cdots\beta_{i-1}\beta_i\beta_{i+1}\cdots\beta_{j-1}\beta_j\cdots$};

\draw (-4.5,-1.5) rectangle (-2.65,-1.75);
\draw (-4.5,-1.8) rectangle (-2.65,-2.05);
\node[right] at (-4.5,-1.625) {\scriptsize $\beta_0\cdots\beta_{i-1}\alpha_i$};
\node[right] at (-4.5,-1.925) {\scriptsize $\beta_0\cdots\beta_{i-1}\beta_i$};
\draw[decorate,decoration={brace,amplitude=4pt}] (-2.65,-2.1) -- (-4.5,-2.1) node[below, pos=0.5,yshift=-0.1cm,align=left]{\scriptsize 1st component};

\draw (-2,-1.5) rectangle (-0.2,-1.75);
\draw (-2,-1.8) rectangle (-0.2,-2.05);
\node[right] at (-2,-1.625) {\scriptsize $\beta_0\cdots\beta_{i-1}\beta_i$};
\node[right] at (-2,-1.925) {\scriptsize $\beta_0\cdots\beta_{i-1}\beta_i$};
\draw[decorate,decoration={brace,amplitude=4pt}] (-0.2,-2.1) -- (-2,-2.1) node[below, pos=0.5,yshift=-0.1cm,align=left]{\scriptsize 2nd component};

\draw (0.45,-1.5) rectangle (2.55,-1.75);
\draw (0.45,-1.8) rectangle (2.55,-2.05);
\node[right] at (0.45,-1.625) {\scriptsize $\beta_{i+1}\cdots\beta_{j-1}\alpha_j$};
\node[right] at (0.45,-1.925) {\scriptsize $\beta_{i+1}\cdots\beta_{j-1}\beta_j$};
\draw[decorate,decoration={brace,amplitude=4pt}] (2.55,-2.1) -- (0.45,-2.1) node[below, pos=0.5,yshift=-0.1cm,align=left]{\scriptsize 3rd component};

\draw (3.2,-1.5) rectangle (5.3,-1.75);
\draw (3.2,-1.8) rectangle (5.3,-2.05);
\node[right] at (3.2,-1.625) {\scriptsize $\pad\qquad\cdots\ \quad \pad$};
\node[right] at (3.2,-1.925) {\scriptsize $\beta_{i+1}\cdots\beta_{j-1}\alpha_j$};
\draw[decorate,decoration={brace,amplitude=4pt}] (5.3,-2.1) -- (3.2,-2.1) node[below, pos=0.5,yshift=-0.1cm,align=left]{\scriptsize 4th component};

\draw (5.95,-1.5) rectangle (8.05,-1.75);
\draw (5.95,-1.8) rectangle (8.05,-2.05);
\node[right] at (5.95,-1.625) {\scriptsize $\pad\qquad\cdots\ \quad \pad$};
\node[right] at (5.95,-1.925) {\scriptsize $\beta_{i+1}\cdots\beta_{j-1}\beta_j$};
\draw[decorate,decoration={brace,amplitude=4pt}] (8.05,-2.1) -- (5.95,-2.1) node[below, pos=0.5,yshift=-0.1cm,align=left]{\scriptsize 5th component};
\end{tikzpicture}
\caption{\label{fig:Cij} Correspondence between components of $\mathfrak{C}_{i,j}$ and parts of computation on $w_i$, $w_j$ and $w_k$, where $i<j<k$.}
\end{figure}

We can then establish the following important proposition. 
Consider an infinite sequence of words that are pairwise from different equivalence
classes as in Lemma~\ref{lm:equivrel}. We show next that we can
extract an infinite subsequence with additional structural properties. Perhaps
the most important property is that $\mathfrak{C}_{i,j}$ is the same for all $i,j$.
This subsequence will allow us to prove the main lemma.

\begin{proposition} \label{prop:char} 
A binary regular relation $R$ over $\Sigma^* \times \Sigma^*$ is not monadic
decomposable iff there are infinite sequences $\{u_i\}_{i \geq 0}$,
$\{\gamma_i\}_{i \geq 0}$, and $\{\delta_i\}_{i \geq 0}$ of words over $\Sigma$ and a quinary tuple
 $\mathfrak{C}$ over ${\sf types}(R^{\not\sim})$ such that for each $i \geq 0$
it is the case that 
\begin{enumerate} 
\item $|\gamma_i| = |\delta_i| > 0$, 
\item $u_i = \delta_0 \cdots \delta_{i-1} \gamma_i$,  
\item $(u_i,u_j) \in R^{\not\sim}$, for each $j > i$, and 
\item $\mathfrak{C}_{i,j} = \mathfrak{C}$, for each $j > i$. 
\end{enumerate} 
\end{proposition}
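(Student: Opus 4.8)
The plan is to prove the equivalence one direction at a time, with essentially all of the content in the ``only if'' direction. For ``if'': given the sequences $\{u_i\}$, $\{\gamma_i\}$, $\{\delta_i\}$ and the tuple $\mathfrak{C}$, conditions~1 and~2 force $|u_i| = \sum_{k=0}^{i}|\delta_k|$ to be strictly increasing in $i$, so the $u_i$ are pairwise distinct, while condition~3 says $(u_i,u_j)\in R^{\not\sim}$, i.e.\ $u_i\not\sim u_j$, for all $i<j$. Hence $\{u_i\}$ witnesses that $\sim$ has infinite index, and Lemma~\ref{lm:equivrel} gives that $R$ is not monadic decomposable. (Conditions~1, 4 and the fixed tuple $\mathfrak{C}$ are not needed for this direction; they are the extra structure paid for in the converse and consumed later in the proof of Lemma~\ref{theo:ptime}.)

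For ``only if'', I would start from an infinite sequence $\{v_i\}$ of pairwise distinct, pairwise $\not\sim$-related words provided by Lemma~\ref{lm:equivrel}; over the finite alphabet $\Sigma$ these have unbounded length. The first step is a pigeonhole/K\"onig argument producing the nested-prefix structure. The prefix tree of $\{v_i\}$ is infinite and finitely branching, so K\"onig's lemma yields an $\omega$-word $\pi$ each of whose finite prefixes $\pi[1..\ell]$ is a prefix of infinitely many $v_i$. Recursively build lengths $0=p_0<p_1<p_2<\cdots$ and words $w_0,w_1,\dots$: given $p_i$, pick $w_i$ to be some $v_j$ with $\pi[1..p_i]$ a prefix of $v_j$ and $|v_j|>p_i$ (possible since infinitely many $v_j$ extend $\pi[1..p_i]$, hence have unbounded length over a finite alphabet), and set $p_{i+1}:=|w_i|$. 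Putting $\delta_i:=\pi[p_i{+}1..p_{i+1}]$ and $\gamma_i:=\sigma(w_i,p_i)$, one checks directly that $|\gamma_i|=|\delta_i|=p_{i+1}-p_i>0$, that $w_i=\delta_0\cdots\delta_{i-1}\gamma_i$, and that $\rho(w_i)=\pi[1..p_{i+1}]$ is a prefix of every $w_j$ with $j>i$ (because $w_j$ extends $\pi[1..p_j]$ and $p_j\geq p_{i+1}$); moreover the $w_i$ inherit pairwise distinctness (strictly increasing lengths) and pairwise $\not\sim$-ness from $\{v_i\}$. So $\{w_i\},\{\gamma_i\},\{\delta_i\}$ already satisfy conditions~1--3.

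The second step is the infinite Ramsey theorem. Colour each pair $\{i,j\}$ with $i<j$ by $\mathfrak{C}_{i,j}$, an element of the finite set ${\sf types}(R^{\not\sim})^5$. Ramsey's theorem yields an infinite $I=\{i_0<i_1<\cdots\}\subseteq\N$ and a tuple $\mathfrak{C}$ with $\mathfrak{C}_{i,j}=\mathfrak{C}$ for all $i<j$ in $I$. Pass to the subsequence $u_k:=w_{i_k}$, with the induced decomposition in which $\delta_0\cdots\delta_{k-1}$ is the length-$|u_{k-1}|$ prefix of $\pi$ and $\gamma_k:=\sigma(u_k,|u_{k-1}|)$ for $k\geq 1$ (and $\gamma_0:=u_0$). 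Since already for the original sequence $\rho(w_i)$ equals the length-$|w_i|$ prefix of $\pi$, we get $\rho(u_k)=\rho(w_{i_k})$, whence each of the five components of $\mathfrak{C}_{k,l}$ computed for the subsequence — every one of which depends only on $w_{i_k}$, $w_{i_l}$, $\rho(w_{i_k})$, $\rho(w_{i_l})$ and $|w_{i_k}|$ — coincides with the corresponding component of $\mathfrak{C}_{i_k,i_l}$. Hence $\mathfrak{C}_{k,l}=\mathfrak{C}$ for all $k<l$, establishing condition~4, while conditions~1--3 survive the relabelling.

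I expect the main obstacle to be arranging step~1 so that the whole construction is ``anchored'' to one infinite word $\pi$: this anchoring is precisely what makes each of the five components of $\mathfrak{C}_{i,j}$ a function of $w_i$, $w_j$ and $|w_i|$ alone (with $\pi$ fixed), and hence what legitimises the Ramsey step, since otherwise re-indexing could change the re-decomposition and spoil the colours. The remaining checks — the length identities, the prefix relations, and the fact that $R^{\not\sim}$-types compose correctly along the decomposition — are routine transition-monoid bookkeeping.
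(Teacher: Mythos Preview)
Your proposal is correct and follows essentially the same route as the paper: the ``if'' direction is immediate from Lemma~\ref{lm:equivrel}, and the ``only if'' direction proceeds via (i) a pigeonhole/K\"onig argument to extract a staircase-shaped subsequence anchored to a single infinite word, followed by (ii) the infinite Ramsey theorem on the finite colouring by $\mathfrak{C}_{i,j}\in{\sf types}(R^{\not\sim})^5$. Your presentation is in fact slightly cleaner in two respects: you make the anchoring $\omega$-word $\pi$ explicit up front (the paper builds the $\beta_i$ iteratively and only implicitly obtains $\pi=\beta_0\beta_1\cdots$), and you spell out why the Ramsey re-indexing preserves $\mathfrak{C}_{i,j}$ --- namely because each component of $\mathfrak{C}_{i,j}$ depends only on $w_i$, $w_j$, and the corresponding prefixes of $\pi$ --- a point the paper leaves tacit.
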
 


\begin{proof}
By Lemma~\ref{lm:equivrel}, the existence of such sequences directly 
implies that the relation is not monadic decomposable. Assume then that $R$ is not monadic decomposable. By 
Lemma~\ref{lm:equivrel}, there exists a sequence $\{v_i\}_{i\geq0}$ such that 
$R^{\not\sim}(v_j,v_\ell)$ for all $j\neq \ell$.
It remains to show how to construct the three sequences satisfying the additional
properties from $\{v_i\}_{i\geq0}$.
First, we construct an auxiliary sequence $\{w_i\}_{i\geq0}$ in the following way. 
Let $v_j$ be the first non-empty word of $\{v_i\}_{i\geq0}$.
Denote $v_j=w'_0=\alpha_0$. Consider prefixes of $v_i$ of
length $|\alpha_0|$. Since $|\alpha_0|$ is finite and the sequence is
infinite, there exists a prefix that appears infinitely often by the pigeonhole
principle. Denote this prefix by $\beta_0$. Now we consider an infinite
subsequence $\{w'_i\}_{i\geq0}$ of $\{v_i\}_{i\geq0}$ where $w'_0=v_j$ 
and $w'_i$, where $i>0$, has $\beta_0$ as the proper prefix. We can write 
$w'_1=\beta_0\alpha_1$ and repeat the procedure. By K\"onig's Lemma, we can
always repeat the procedure and obtain the desired auxiliary sequence $\{w_i\}_{i\geq0}$
in the limit.

From Infinite Ramsey's Theorem, there is an infinite sequence $0 \leq \ell_0 < \ell_1 < \cdots$
and a tuple $\mathfrak{C} \in {\sf types}(R^{\not\sim})^5$ such that for each
$0 \leq i < j$ we have $\mathfrak{C}_{\ell_i,\ell_{j}} = \mathfrak{C}$. Namely,
we consider a complete infinite graph with natural numbers as vertices. An edge
between vertices $i$ and $j$ is coloured with
$\mathfrak{C}_{i,j}\in{\sf types}(R^{\not\sim})^5$. Now there is an infinite
clique coloured with $\mathfrak{C}$ which gives us our infinite sequence
$0 \leq \ell_0 < \ell_1 < \cdots$.

We then define the $u_i$s, $\gamma_i$s, and $\delta_i$s, for $i \geq 0$, as follows. 
\begin{itemize} 
\item $\gamma_0  = w_{\ell_0}$ and 
$\gamma_{i+1}$, for $i > 0$, is the word $\sigma(w_{\ell_{i+1}},|w_{\ell_i}|)$.  
\item $\delta_i$ is defined as $\rho(\gamma_i)$. 
\item $u_i = \delta_0 \cdots \delta_{i-1} \gamma_i$, for each $i \geq 0$. 
\end{itemize} 
It is easy to see then that  $u_i = w_{\ell_i}$ and 
$\rho(u_{i}) = \delta_0 \cdots \delta_{i-1} \delta_i = \rho(w_{\ell_i})$, for each $i \geq 0$.  
Therefore, $\{u_i\}_{i \geq 0}$, $\{\gamma_i\}_{i \geq 0}$, $\{\delta_i\}_{i \geq 0}$,
and $\mathfrak{C}$ satisfy the conditions in the statement of the proposition. 
See Figure~\ref{fig:newstairs} for a pictorial depiction of the construction.
\end{proof}

In other words, by Proposition \ref{prop:char}, there is a sequence $\{u_i\}_{i\geq0}$
and a $\mathfrak{C}$ such that for each $i,j$, the runs on $R^{\not\sim}$ are
synchronized after $(\gamma_i,\delta_i)$, $(\delta_i,\delta_i)$,
$(\delta_i^{-1}\gamma_j,\delta_i^{-1}\delta_j)$,
$(\varepsilon,\delta_i^{-1}\gamma_j)$ and
$(\varepsilon,\delta_i^{-1}\delta_j)$ have been read.
In particular, the runs are synchronized in states of 
$R_{\gamma_i,\delta_i}^{\not\sim}$, $R_{\delta_i,\delta_i}^{\not\sim}$,
$R_{\delta_i^{-1}\gamma_j,\delta_i^{-1}\delta_j}^{\not\sim}$,
$R_{\varepsilon,\delta_i^{-1}\gamma_j}^{\not\sim}$ and
$R_{\varepsilon,\delta_i^{-1}\delta_j}^{\not\sim}$, respectively. 

\begin{figure}[htb]
\centering
\input{fignewstairs}
\caption{\label{fig:newstairs} An illustration of construction of sequence $\{u_i\}_{i\geq0}$ of Proposition~\ref{prop:char} in two steps. Here 
$R^{\not\sim}(u_i,u_j)$, $R^{\not\sim}(u'_i,u'_j)$ and $R^{\not\sim}(w_i,w_j)$ for
every $i\neq j$.
Moreover as $\mathfrak{C}=\mathfrak{C}_{i,j}$,
the sets of states reachable after each $\delta_i$ and $\gamma_i$ are the same
(indicated by thick lines).
}
\end{figure}

\begin{example}\label{ex:stairs}
Let us illustrate the importance of Proposition~\ref{prop:char}. Consider relation
$R=\{(u,v)\in \Sigma^*\times\Sigma^*\mid u<v\}$, where $\Sigma=\{a,b\}$. This is a regular relation
as we can construct a DFA recognizing this relation; see the left automaton of
Figure~\ref{fig:prefix} where the sink state and the transitions to the sink state are omitted.
On the other hand, the relation does not have a monadic decomposition as
a rather convoluted sequence $\{u_i\}_{i\geq0}$ defined by 
\begin{align*}
u_i = \begin{cases} a^{4\lfloor\frac{i}{8}\rfloor}b & \text{if } i\equiv 0 \mod 8 \\
					a^{4\lfloor\frac{i}{8}\rfloor}a & \text{if } i\equiv 1 \mod 8 \\
					a^{4\lfloor\frac{i}{8}\rfloor}aa & \text{if } i\equiv 2 \mod 8 \\
					a^{4\lfloor\frac{i}{8}\rfloor}ab & \text{if } i\equiv 3 \mod 8 \\
					a^{4\lfloor\frac{i}{8}\rfloor}aab & \text{if } i\equiv 4 \mod 8 \\
					a^{4\lfloor\frac{i}{8}\rfloor}aaa & \text{if } i\equiv 5 \mod 8 \\
					a^{4\lfloor\frac{i}{8}\rfloor}aaaa & \text{if } i\equiv 6 \mod 8 \\
					a^{4\lfloor\frac{i}{8}\rfloor}aaab & \text{if } i\equiv 7 \mod 8
\end{cases}
\end{align*}
satisfies the properties of Lemma~\ref{lm:equivrel}.
Indeed, it is easy to see that for any $u_i$ and $u_j$, where $i\neq j$, $u_i\not\sim u_j$.
(In fact, for any $u,v\in\Sigma^*$, such that $u\neq v$, then $u\not\sim v$.)
An automaton for $R^{\not\sim}$ is presented on the right of Figure~\ref{fig:prefix}.

Next, we follow the steps of Proposition~\ref{prop:char}. First, we construct subsequence
$\{w_i\}_{i\geq0}$. In this sequence, $w_i=u_{2i}$.
Further, $\alpha_i=b$ if $i\equiv 0 \mod 2$ and $\alpha_i=a$ otherwise, and
$\beta_i=a$ for every $i$. Next, we highlight the importance of the second step
of the construction. Consider three words of the sequence, $v_1=0^n 1$, $v_2=0^{n'}$
and $v_3=0^{n''}1$, where $n<n'<n''$, and the runs on $(v_1,v_2)$ and $(v_2,v_3)$
in $R^{\not\sim}$.
The pair $(v_1,v_2)$ is accepted in state $q_2$, while $(v_2,v_3)$ is accepted
in $q_3$.
We can extract a subsequence where all accepting runs visit the same states,
e.g., $\{u_i\}_{i\geq0}$, where $u_0=00$ and $u_i=(00)^{i-1}$ for $i>0$. Now runs
on any pair of words will be accepted in $q_3$. That is, all pairs of words from
this sequence visit exactly the same states of $R^{\not\sim}$.
\end{example}

\begin{figure}
\centering
\begin{tikzpicture}[initial text=, initial distance=1.5ex,>=stealth, every state/.style={inner sep=2pt, minimum size=0pt}]
\node[state,initial] (q0) {$\phantom{q}$};
\node[state,accepting] (qf) [right = 1.5cm of q0] {$\phantom{q}$};

\path[->] (q0) edge node[above] {$(\bot,x)$} (qf)
		  (q0) edge[loop above] node[above] {$(x,x)$} (q0)
		  (qf) edge[loop above] node[above] {$(\bot,x)$} (qf);

\node[state,initial] (q0) [right = 2.5cm of qf] {$q_0$};
\node[state] (r1) [above right = 0.6cm and 1.5cm of q0] {$q_1$}; 
\node[state] (r2) [below right = 0.6cm and 1.5cm of q0] {$\phantom{q_0}$}; 
\node[state,accepting] (p1) [above right = 0.2cm and 2cm of r1] {$q_2$};
\node[state,accepting] (p2) [below right = 0.2cm and 2cm of r1] {$q_3$};
\node[state,accepting] (p3) [above right = 0.2cm and 2cm of r2] {$\phantom{q_0}$};
\node[state,accepting] (p4) [below right = 0.2cm and 2cm of r2] {$\phantom{q_0}$};

\path[->] (q0) edge node[above,sloped] {$(x,x)$} (r1)
		  (q0) edge node[below,sloped] {$(x,x)$} (r2)
		  (r1) edge[loop above] node[above] {$(x,x)$} (r1)
		  (r2) edge[loop below] node[below] {$(x,x)$} (r2)
		  
		  (r1) edge node[sloped,align=center,pos=0.6] {$(a,b)$ \\ $(b,a)$} (p1)
		  (r1) edge node[below,sloped] {$(\bot,x)$} (p2)
		  (r2) edge node[sloped,align=center] {$(a,b)$ \\ $(b,a)$} (p3)
		  (r2) edge node[below,sloped] {$(x,\bot)$} (p4)
		  
		  (p1) edge[loop right] node[right] {$(y,y')$} (p1)
		  (p2) edge[loop right] node[right] {$(\bot,x)$} (p2)
		  (p3) edge[loop right] node[right] {$(y,y')$} (p3)
		  (p4) edge[loop right] node[right] {$(x,\bot)$} (p4);

\end{tikzpicture}
\caption{\label{fig:prefix} Automata for binary regular relation $R$ (left) and
$R^{\not\sim}$ (right). Here $x\in\Sigma$ and $y,y'\in\Sigma_\bot$.}
\end{figure}

We can then prove the following crucial result. We assume here that  
$R$ is a binary regular relation over 
$\Sigma \times \Sigma$ such that $R^{\not\sim}$ is given as an NFA over 
$\Sigma \times \Sigma$ whose set of states is
$Q$. We further assume that $q_0$ is the initial state of $R^{\not\sim}$ and $F$ its set of 
final states. 

\begin{lemma} \label{theo:main} 
Relation $R$ is not monadic decomposable iff there 
are an infinite sequence $\{(x_i,y_i)\}_{i \geq 0}$ of pairs of words over $\Sigma$ 
and states $q,q',p,r \in Q$, such that $p \in F$, it is the case that 
$q \in R^{\not\sim}_{x_0,y_0}(q_0)$, and the following statements hold for each $i \geq 0$. 

\begin{enumerate}
\item It is the case  that $|x_{i}| = |y_i|$ and $y_i$ is a prefix of both 
$x_{i+1}$ and $y_{i+1}$.    
\item We have that 
\begin{align*} 
q' &\in R^{\not\sim}_{y_i,y_i}(q_0); & q &\in R^{\not\sim}_{y_i^{-1}x_{i+1},y_i^{-1}y_{i+1}}(q'); 
& p &\in R^{\not\sim}_{\varepsilon,y_i^{-1}x_{i+1}}(q); & r &\in R^{\not\sim}_{\varepsilon,y_i^{-1}y_{i+1}}(q).
\end{align*}
\item If $i > 0$, we also have that 
$p \in R^{\not\sim}_{\varepsilon,y_i^{-1}x_{i+1}}(r)$ and $r \in R^{\not\sim}_{\varepsilon,y_i^{-1}y_{i+1}}(r)$.
\end{enumerate} 
\end{lemma}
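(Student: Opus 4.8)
Here is my plan for proving Lemma~\ref{theo:main}.

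The plan is to prove the equivalence by identifying, in the ``$\Rightarrow$'' direction, $x_i$ with $u_i$ and $y_i$ with $\rho(u_i)$ from Proposition~\ref{prop:char}, and by replacing the global invariant ``$\mathfrak{C}_{i,j}=\mathfrak{C}$'' with a single quadruple $q,q',p,r$ of states through which the accepting runs of $R^{\not\sim}$ on the pairs $(u_0,u_j)$ can be routed. Throughout I write $M\cdot N$ for the composition of transition-monoid elements ``first $M$, then $N$'', so that reading a pair $(v_1v_2,w_1w_2)$ padded corresponds to $R^{\not\sim}_{v_1,w_1}\cdot R^{\not\sim}_{v_2,w_2}$ whenever $|v_1|=|w_1|$, and I use that $R^{\not\sim}$ is symmetric and irreflexive (as $\sim$ is reflexive) and that $R^{\not\sim}_{\varepsilon,\varepsilon}$ is the identity.

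For the ``$\Leftarrow$'' direction I would argue as follows. First, the lengths $|x_i|$ are strictly increasing: they are non-decreasing since $|x_i|=|y_i|$ and $y_i\le x_{i+1}$, and if $|x_{i+1}|=|x_i|$ for some $i$ then $y_i=x_{i+1}=y_{i+1}$, so condition~(2) forces $q=q'=p\in F$ with $q\in R^{\not\sim}_{y_i,y_i}(q_0)$, i.e.\ $(y_i,y_i)\in R^{\not\sim}$, contradicting irreflexivity. Next, composing the two transitions in condition~(2) gives $q\in R^{\not\sim}_{x_{i+1},y_{i+1}}(q_0)$ for every $i$, hence, with the hypothesis $q\in R^{\not\sim}_{x_0,y_0}(q_0)$, we get $q\in R^{\not\sim}_{x_i,y_i}(q_0)$ for all $i$. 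Fixing $i<j$ and using $y_i\le y_{i+1}\le\cdots\le y_{j-1}\le x_j$, the padded pair of $(x_i,x_j)$ is read as ``$(x_i,y_i)$'' followed by ``$(\varepsilon,y_i^{-1}x_j)$'', and the latter factors through $y_i^{-1}y_{i+1},y_{i+1}^{-1}y_{i+2},\dots,y_{j-2}^{-1}y_{j-1},y_{j-1}^{-1}x_j$; so from $q_0$ one reaches $q$ (reading $(x_i,y_i)$), then $r$ (reading $(\varepsilon,y_i^{-1}y_{i+1})$, condition~(2)), stays at $r$ while reading $(\varepsilon,y_k^{-1}y_{k+1})$ for $i<k<j$ (condition~(3), valid as $k>0$), and finally reaches $p\in F$ reading $(\varepsilon,y_{j-1}^{-1}x_j)$ (condition~(3), or condition~(2) when $j=i+1$). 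Thus $(x_i,x_j)\in R^{\not\sim}$, the $x_i$ are pairwise $R^{\not\sim}$-related, and Lemma~\ref{lm:equivrel} yields that $R$ is not monadic decomposable.

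For the ``$\Rightarrow$'' direction I would take $\{u_i\},\{\gamma_i\},\{\delta_i\}$ and $\mathfrak{C}=(\mathfrak{C}^{(1)},\dots,\mathfrak{C}^{(5)})$ from Proposition~\ref{prop:char}, set $x_i:=u_i$ and $y_i:=\rho(u_i)=\delta_0\cdots\delta_i$ (so $y_i^{-1}x_{i+1}=\gamma_{i+1}$, $y_i^{-1}y_{i+1}=\delta_{i+1}$), and first record, by unwinding $\mathfrak{C}_{i,j}$ and instantiating ``$\mathfrak{C}_{i,j}=\mathfrak{C}$'' at $j=i+1$ and $j=i+2$ (using $\sigma(\rho(u_j),|u_i|)=\delta_{i+1}\cdots\delta_j$ and $\sigma(u_j,|u_i|)=\delta_{i+1}\cdots\delta_{j-1}\gamma_j$), the identities $R^{\not\sim}_{\varepsilon,\delta_\ell}=\mathfrak{C}^{(5)}$, $R^{\not\sim}_{\varepsilon,\gamma_\ell}=\mathfrak{C}^{(4)}$, $R^{\not\sim}_{\gamma_\ell,\delta_\ell}=\mathfrak{C}^{(3)}$ for $\ell\ge 1$, $R^{\not\sim}_{\rho(u_\ell),\rho(u_\ell)}=\mathfrak{C}^{(2)}$ for $\ell\ge 0$, together with $\mathfrak{C}^{(5)}\cdot\mathfrak{C}^{(5)}=\mathfrak{C}^{(5)}$, $\mathfrak{C}^{(5)}\cdot\mathfrak{C}^{(4)}=\mathfrak{C}^{(4)}$, and $\mathfrak{C}^{(2)}\cdot\mathfrak{C}^{(3)}=\mathfrak{C}^{(1)}$. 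Since $(u_0,u_j)\in R^{\not\sim}$ for all $j\ge1$, fix an accepting run of $R^{\not\sim}$ on the padded pair of $(u_0,u_j)$: after reading $(u_0,\rho(u_0))$ it is in some $c_0\in\mathfrak{C}^{(1)}(q_0)$, then reads $(\varepsilon,\delta_1),\dots,(\varepsilon,\delta_{j-1}),(\varepsilon,\gamma_j)$ through states $c_0,c_1,\dots,c_{j-1}$ and a final state $f_j$. By the pigeonhole principle some value $q$ of $c_0$ recurs for infinitely many $j$; fixing such a $j>|Q|$ (so $c_0=q$) and then, by pigeonhole again, $a<b\le j-1$ with $c_a=c_b=:r$, idempotency of $\mathfrak{C}^{(5)}$ turns the sub-paths $c_0\to\cdots\to c_b$, $c_a\to\cdots\to c_b$, $c_a\to\cdots\to c_{j-1}$ (whose lengths $b$, $b-a$, $j-1-a$ are all $\ge1$ since $a\le j-2$) into $r\in\mathfrak{C}^{(5)}(q)$, $r\in\mathfrak{C}^{(5)}(r)$, $c_{j-1}\in\mathfrak{C}^{(5)}(r)$ respectively; setting $p:=f_j\in F$ and using absorption, $p\in\mathfrak{C}^{(4)}(c_{j-1})\subseteq\mathfrak{C}^{(4)}(r)\subseteq\mathfrak{C}^{(4)}(q)$. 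Finally $q\in\mathfrak{C}^{(1)}(q_0)=(\mathfrak{C}^{(2)}\cdot\mathfrak{C}^{(3)})(q_0)$ yields $q'\in\mathfrak{C}^{(2)}(q_0)$ with $q\in\mathfrak{C}^{(3)}(q')$; translating the identities back, $\{(x_i,y_i)\}$ with $q,q',p,r$ satisfies conditions~(1)--(3) as well as $p\in F$ and $q\in R^{\not\sim}_{x_0,y_0}(q_0)$.

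The step I expect to be the main obstacle is producing this quadruple of states, since Proposition~\ref{prop:char} only yields the global profile $\mathfrak{C}$ while the lemma demands, among other things, a genuine \emph{one-letter} self-loop $r\in R^{\not\sim}_{\varepsilon,\delta_{i+1}}(r)$. The point that makes it work at no extra cost is that the constancy of $\mathfrak{C}_{i,j}$ already forces $\mathfrak{C}^{(5)}=R^{\not\sim}_{\varepsilon,\delta_{i+1}}$ to be idempotent and to left-absorb $\mathfrak{C}^{(4)}$ (by comparing the instances $j=i+1$ and $j=i+2$), so any loop state located by a routine pigeonhole argument inside a sufficiently long accepting run automatically collapses to the required one-step conditions; the remainder (prefix bookkeeping, the $i=0$ versus $i>0$ split, and symmetry of $R^{\not\sim}$) is then routine.
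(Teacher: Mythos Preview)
Your proposal is correct. The ``$\Leftarrow$'' direction matches the paper's argument, and your extra step---deriving strict growth of $|x_i|$ from irreflexivity of $R^{\not\sim}$---actually fills a small gap: the paper simply asserts ``clearly $|\alpha_i|=|\beta_i|>0$'' without justification. One cosmetic point: in your chain through $r$ the range ``$i<k<j$'' should read ``$i<k<j-1$'' (the step $k=j-1$ is the final move to $p$, not a loop at $r$); the case split $j=i+1$ versus $j\ge i+2$ that you indicate is the right way to handle this.

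The ``$\Rightarrow$'' direction is where you genuinely diverge from the paper. The paper fixes $q,p$ from the accepting run on $(u_0,u_1)$, then $q'$ from a factorisation of $R^{\not\sim}_{x_1,y_1}$, and finally delegates the existence of $r$ to a separate Lemma~\ref{lem:rexists}, which builds an auxiliary infinite sequence $r_1,r_2,\dots$ with $r_k\in R^{\not\sim}_{\varepsilon,y_{k-1}^{-1}y_k}(r_{k-1})$ and $p\in R^{\not\sim}_{\varepsilon,y_k^{-1}x_{k+1}}(r_k)$, applies pigeonhole to get $r_j=r_k$, and then uses the equalities $\mathfrak{C}_{0,j}=\mathfrak{C}_{1,2}=\mathfrak{C}_{j,k}$ piecemeal to transport the resulting relations back to indices $1,2$. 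You instead first isolate the monoid identities $\mathfrak{C}^{(5)}\cdot\mathfrak{C}^{(5)}=\mathfrak{C}^{(5)}$, $\mathfrak{C}^{(5)}\cdot\mathfrak{C}^{(4)}=\mathfrak{C}^{(4)}$ and $\mathfrak{C}^{(2)}\cdot\mathfrak{C}^{(3)}=\mathfrak{C}^{(1)}$ (by comparing $\mathfrak{C}_{i,i+1}$ with $\mathfrak{C}_{i,i+2}$, respectively $\mathfrak{C}_{0,1}$ with $\mathfrak{C}_{1,2}$), and then read off all four states from a \emph{single} accepting run on $(u_0,u_j)$ with $j>|Q|$: one pigeonhole inside that run produces $r$, and the idempotent/absorption laws immediately collapse the multi-step sub-runs to the required one-step conditions. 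This is shorter and makes transparent why the Ramsey colour $\mathfrak{C}$ suffices; the paper's route, by contrast, keeps the argument closer to concrete runs and postpones the algebraic content to the verification steps in Lemma~\ref{lem:rexists}. (Your preliminary pigeonhole over $j$ to pin down $q$ is harmless but unnecessary: any single $j>|Q|$ already works, since $q:=c_0$ automatically lies in $\mathfrak{C}^{(1)}(q_0)$.)
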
 

\begin{proof} 
Assume first that $R$ is not monadic decomposable. By Proposition~\ref{prop:char},
there are infinite sequences $\{u_i\}_{i \geq 0}$, $\{\gamma_i\}_{i \geq 0}$, and 
$\{\delta_i\}_{i \geq 0}$ of words over $\Sigma$ and a quinary tuple
$\mathfrak{C}$ over ${\sf types}(R^{\not\sim})$ such that for each $i \geq 0$ it
is the case that 
\begin{enumerate} 
\item $|\gamma_i| = |\delta_i| > 0$, 
\item $u_i = \delta_0 \cdots \delta_{i-1} \gamma_i$,  
\item $(u_i,u_j) \in R^{\not\sim}$, for each $j > i$, and 
\item $\mathfrak{C}_{i,j} = \mathfrak{C}$, for each $j > i$. 
\end{enumerate} 
We then define a sequence $\{(x_i,y_i)\}_{i \geq 0}$ such that $x_i := u_i$, for each $i \geq 0$, and $y_i$ is the prefix of $x_{i+1} = u_{i+1}$ 
that has the same length as $x_i = u_i$, i.e., $y_i=\tau(x_{i+1},|x_i|)$. Hence, $y_i = \rho({u}_i) = \delta_0 \cdots \delta_i$. Clearly,
$|x_i| = |y_i| \geq 0$ and $y_i$ is a prefix of both $x_{i+1}$ and $y_{i+1}$,
for each $i \geq 0$. We prove next that the sequence $\{(x_i,y_i)\}_{i \geq 0}$
also satisfies the remaining conditions.   

Before defining $q,q',p,r \in Q$, let us highlight the intuition why such states
exist for every~$i$. We can find such states because by our assumption
$\mathfrak{C}_{i,j}=\mathfrak{C}$ for each $i < j$. Further, whether $q$ is reachable
from $q_0$ is stored in the first component of $\mathfrak{C}$. Similarly, the second
and third components of $\mathfrak{C}$ allow us to find $q'$ that is reachable
from $q_0$ and such that $q$ is reachable from $q'$. Finally, the fourth component
is for checking whether $p$ is reachable from $q$ and $r$, while the fifth component
for checking that $r$ is reachable from both $q$ and $r$.

Let us define $q,q',p,r \in Q$ as follows.
\begin{itemize} 
\item $q$ and $p$ are states such that $p \in F$ and it is the case that $q \in R^{\not\sim}_{x_0,y_0}(q_0)$ and $p \in R^{\not\sim}_{\varepsilon,y_0^{-1}x_1}(q)$. 
Notice that such $q$ and $p$ must exist as $(x_0,x_1) \in R^{\not\sim}$, i.e., it holds that 
$R^{\not\sim}_{x_0,x_1}(q_0) \cap F \neq \emptyset$, 
and $R^{\not\sim}_{x_0,x_1}(q_0) = R^{\not\sim}_{x_0,y_0}(q_0) \circ R^{\not\sim}_{\varepsilon,y_0^{-1}x_1}$. 
\item $q'$ is a state such that $q' \in R^{\not\sim}_{y_0,y_0}(q_0)$ and $q \in R^{\not\sim}_{y_0^{-1}x_1,y_0^{-1}y_1}(q')$. 
Notice that such a $q'$ must exist. Indeed, since $\mathfrak{C}_{0,1} = \mathfrak{C}_{1,2} = \mathfrak{C}$,  
we have $R^{\not\sim}_{u_0,\rho({u}_0)} = R^{\not\sim}_{x_0,y_0} = R^{\not\sim}_{u_1,\rho(u_1)} 
= R^{\not\sim}_{x_1,y_1}$. This implies that $q \in R^{\not\sim}_{x_1,y_1}(q_0)= R^{\not\sim}_{y_0,y_0}(q_0) \circ R^{\not\sim}_{y_0^{-1}x_1,y_0^{-1}y_1}$, as we know that $q \in R^{\not\sim}_{x_0,y_0}(q_0)$ and there must be an intermediate state $q'$ that is reached after
reading $(y_0,y_0)$.

\item We have that $r$ is a state such that 
\begin{align*}
r \in R^{\not\sim}_{\varepsilon,y_0^{-1}y_1}(q); \qquad  p \in R^{\not\sim}_{\varepsilon,y_1^{-1}x_2}(r); \qquad \text{and} \qquad r &\in R^{\not\sim}_{\varepsilon,y_1^{-1}y_2}(r).
\end{align*}
The existence of such state $r$ is not obvious. We prove this as 
Lemma~\ref{lem:rexists} after this proof.

\end{itemize} 
%

We now prove that $q,q',p,r$ satisfy all the requirements in the statement of the Lemma. By definition, $q\in R^{\not\sim}_{x_0,y_0}(q_0)$ and $p \in F$. 
We can then prove by induction that for each $i \geq 0$ it is the case that 
\begin{align*} 
q' \in R^{\not\sim}_{y_i,y_i}(q_0);  \quad q \in R^{\not\sim}_{y_i^{-1}x_{i+1},y_i^{-1}y_{i+1}}(q'); \quad 
p \in R^{\not\sim}_{\varepsilon,y_i^{-1}x_{i+1}}(q); \quad  r \in R^{\not\sim}_{\varepsilon,y_i^{-1}y_{i+1}}(q); 
\end{align*} 
and, in addition, that for each $i > 0$ it is the case that  $p \in R^{\not\sim}_{\varepsilon,y_i^{-1}x_{i+1}}(r)$ and $r \in R^{\not\sim}_{\varepsilon,y_i^{-1}y_{i+1}}(r)$. The base case $i = 0$ holds by definition. 

We now prove by induction that for each $i \geq 0$ it is the case that 
\begin{align} \label{eq:ind}
q' \in R^{\not\sim}_{y_i,y_i}(q_0);  \quad q \in R^{\not\sim}_{y_i^{-1}x_{i+1},y_i^{-1}y_{i+1}}(q'); \quad 
p \in R^{\not\sim}_{\varepsilon,y_i^{-1}x_{i+1}}(q); \quad  r \in R^{\not\sim}_{\varepsilon,y_i^{-1}y_{i+1}}(q); 
\end{align} 
and, in addition, that for each $i > 0$ it is the case that  
\begin{align} \label{eq:ind2} 
p \in R^{\not\sim}_{\varepsilon,y_i^{-1}x_{i+1}}(r) \quad \text{and} \quad r &\in R^{\not\sim}_{\varepsilon,y_i^{-1}y_{i+1}}(r).
\end{align} 

We start with \eqref{eq:ind}, which we prove by induction. 

\begin{itemize}
\item Base case $i = 0$. We have already proven it  when we showed the existence of
$q,q',p$ and $r$.

\item Inductive case $i+1$, for $i \geq 0$. 

\begin{itemize}

\item $q' \in R^{\not\sim}_{y_{i+1},y_{i+1}}(q_0)$. This is the case since $q' \in R^{\not\sim}_{y_0,y_0}(q_0)$ and 
$R^{\not\sim}_{y_{i+1},y_{i+1}}(q_0) = R^{\not\sim}_{y_0,y_0}(q_0)$. The latter holds 
since $\mathfrak{C}_{0,1} = \mathfrak{C}_{i+1,i+2} = \mathfrak{C}$, and thus $R^{\not\sim}_{\rho({u}_0),\rho({u}_0)} = R^{\not\sim}_{y_0,y_0} = R^{\not\sim}_{\rho({u}_{i+1}),\rho({u}_{i+1})} 
= R^{\not\sim}_{y_{i+1},y_{i+1}}$.

\item $q \in R^{\not\sim}_{y_i^{-1}x_{i+1},y_i^{-1}y_{i+1}}(q')$. This holds as we have that 
\begin{align*}
R^{\not\sim}_{x_{i+1},y_{i+1}}(q_0)  &=  R^{\not\sim}_{y_{i},y_{i}}(q_0) \circ R^{\not\sim}_{y_i^{-1}x_{i+1},y_i^{-1}y_{i+1}},
\end{align*}
$q' \in R^{\not\sim}_{y_{i},y_{i}}(q_0)$ by the previous item, and it is the case that $q \in R^{\not\sim}_{y_i^{-1}x_{i+1},y_i^{-1}y_{i+1}}(q')$. 
The latter is the case as 
$\mathfrak{C}_{0,1} = \mathfrak{C}_{i,i+1} = \mathfrak{C}$, and thus 
\begin{align*}
R^{\not\sim}_{\sigma(u_1,|u_0|),\sigma(\rho(u_1),|u_0|)} = R^{\not\sim}_{y_0^{-1}x_1,y_0^{-1}y_1} = R^{\not\sim}_{\sigma(u_{i+1},|{u}_i|),\sigma(\rho({u}_{i+1}),|u_i|)} 
= R^{\not\sim}_{y_i^{-1}x_{i+1},y_i^{-1}y_{i+1}}.
\end{align*}
This implies that $q \in R^{\not\sim}_{y_i^{-1}x_{i+1},y_i^{-1}y_{i+1}}(q')$ as 
$q \in R^{\not\sim}_{y_0^{-1}x_{1},y_0^{-1}y_{1}}(q')$. 

\item $p \in R^{\not\sim}_{\varepsilon,y_i^{-1}x_{i+1}}(q)$. This is the case since $
R^{\not\sim}_{\varepsilon,y_i^{-1}x_{i+1}}(q)= R^{\not\sim}_{\varepsilon,y_0^{-1}x_1}(q)$, which holds
since $\mathfrak{C}_{0,1} = \mathfrak{C}_{i,i+1} = \mathfrak{C}$, and thus 
\begin{align*}
R^{\not\sim}_{\varepsilon,\sigma(x_1,|x_0|)} \,= \, 
R^{\not\sim}_{\varepsilon,y_0^{-1}x_1} \, = \, R^{\not\sim}_{\varepsilon,\sigma(x_{i+1},|x_i|)} \, = \, R^{\not\sim}_{\varepsilon,y_i^{-1}x_{i+1}}.
\end{align*} 
Now the result follows from the fact that $p \in R^{\not\sim}_{\varepsilon,y_0^{-1}x_1}(q)$ by definition. 

\item $r \in R^{\not\sim}_{\varepsilon,y_i^{-1}y_{i+1}}(q)$. The proof is analogous to the previous case. 
\end{itemize}
This finishes the proof of \eqref{eq:ind}. 
\end{itemize}
We now prove \eqref{eq:ind2} by induction. 
\begin{itemize}
\item Base case $i = 1$. Again, we have by definition that 
$p \in R^{\not\sim}_{\varepsilon,y_1^{-1}x_2}(r)$ and $r \in R^{\not\sim}_{\varepsilon,y_1^{-1}y_2}(r)$. 


\item Inductive case $i+1$, for $i \geq 1$. We have that $R^{\not\sim}_{\varepsilon,y_{i+1}^{-1}x_{i+2}} = R^{\not\sim}_{\varepsilon,y_1^{-1}x_{2}}$ 
and $R^{\not\sim}_{\varepsilon,y_{i+1}^{-1}x_{i+2}} = R^{\not\sim}_{\varepsilon,y_1^{-1}x_{2}}$. This follows from the fact that $\mathfrak{C}_{0,1} = \mathfrak{C}_{i+1,i+2} = \mathfrak{C}$. Therefore, 
\begin{align*} 
p  \in R^{\not\sim}_{\varepsilon,y_{i+1}^{-1}x_{i+2}}(r)  & =  R^{\not\sim}_{\varepsilon,y_1^{-1}x_2}(r) \quad \text{ and } \quad
r  \in R^{\not\sim}_{\varepsilon,y_{i+1}^{-1}y_{i+2}}(r)  = R^{\not\sim}_{\varepsilon,y_1^{-1}y_2}(r).
\end{align*} 
\end{itemize}

This concludes the proof of the first direction.

Let us assume now that  there are an infinite sequence $\{(x_i,y_i)\}_{i \geq 0}$ of
pairs of words over $\Sigma$ and states $q,q',p,r \in Q$ that satisfy the
conditions stated in the statement of the lemma. We prove next that $R$ is not monadic decomposable 
by showing that there are infinite sequences $\{w_i\}_{i \geq 0}$, 
$\{\alpha_i\}_{i \geq 0}$ and
$\{\beta_i\}_{i \geq 0}$ of words over $\Sigma$ such that $\{w_i\}_{i \geq 0}$,
$\{\alpha_i\}_{i \geq 0}$, and $\{\beta_i\}_{i \geq 0}$ satisfy the conditions stated
in Lemma~\ref{lm:equivrel}. 

We define $w_i := x_i$ for each $i \geq 0$. Furthermore, 
$\alpha_0 := x_0$, $\beta_0 := y_0$, and for each $i > 0$ we set 
$\alpha_i := y_{i-1}^{-1}x_i$ and $\beta_i := y_{i-1}^{-1}y_i$. Clearly
$|\alpha_i| = |\beta_i| > 0$ 
and $w_i = x_i = \beta_0 \cdots \beta_{i-1} \alpha_i$, for each  $i \geq 0$. We prove next that $(w_i,w_j) \in R^{\not\sim}$
for each $0 \leq i < j$. Actually, we prove a stronger claim: 
$p \in R^{\not\sim}_{w_i,w_j}(q_0)$ and $r \in R^{\not\sim}_{w_i,\rho(w_j)}(q_0)$, for each $0 \leq i < j$, 
where as before $\rho(w_j) = \tau(w_{j+1},|w_j|) = \beta_0 \beta_1 \cdots \beta_j$. 
The result follows since $p \in F$ by assumption. 
is by induction on $j \geq 1$. 

\begin{itemize} 

\item Base case $j = 1$. We only have to consider $i = 0$. Then it is the case that 
\begin{align*}
p \, \in \, R^{\not\sim}_{\varepsilon,y_0^{-1}x_1}(q)  \, \subseteq \, R^{\not\sim}_{x_0,y_0}(q_0) \circ R^{\not\sim}_{\varepsilon,y_0^{-1}x_1} \, = \, R^{\not\sim}_{x_0,x_1}(q_0) \, = \, R^{\not\sim}_{w_0,w_1}(q_0),
\end{align*}
and, in addition, that 
\begin{align*}
r \, \in \, R^{\not\sim}_{\varepsilon,y_0^{-1}y_1}(q)  \, \subseteq \, R^{\not\sim}_{x_0,y_0}(q_0) \circ R^{\not\sim}_{\varepsilon,y_0^{-1}y_1} \, = \, R^{\not\sim}_{x_0,y_1}(q_0) \, = \, R^{\not\sim}_{w_0,\rho(w_1)}(q_0).
\end{align*}

\item Inductive case $j+1$, for $j \geq 1$. Consider an arbitrary $i$ with $0 \leq i < j$. 
We have that 
\begin{align*}
p \, \in \, R^{\not\sim}_{\varepsilon,y_j^{-1}x_{j+1}}(r)  \, \subseteq \, R^{\not\sim}_{x_i,y_j}(q_0) \circ R^{\not\sim}_{\varepsilon,y_j^{-1}x_{j+1}} \, = \, R^{\not\sim}_{x_i,x_{j+1}}(q_0) \, = \, R^{\not\sim}_{w_i,w_{j+1}}(q_0),
\end{align*}
%
where the containment holds by induction hypothesis. 
%
Analogously, we have that 
\begin{align*}
r \, \in \, R^{\not\sim}_{\varepsilon,y_j^{-1}y_{j+1}}(r)  \, \subseteq \, R^{\not\sim}_{x_i,y_j}(q_0) \circ R^{\not\sim}_{\varepsilon,y_j^{-1}y_{j+1}} \, = \, R^{\not\sim}_{x_i,y_{j+1}}(q_0) \, = \, R^{\not\sim}_{w_i,\rho(w_{j+1})}(q_0).
\end{align*}
%

\end{itemize} 
 
 This finishes the proof of the theorem. 
\end{proof} 

Next we prove the existence of state $r$ satisfying
\begin{align*}
r \in R^{\not\sim}_{\varepsilon,y_0^{-1}y_1}(q); \qquad  p \in R^{\not\sim}_{\varepsilon,y_1^{-1}x_2}(r); \qquad \text{and} \qquad r &\in R^{\not\sim}_{\varepsilon,y_1^{-1}y_2}(r).
\end{align*}

\begin{lemma}\label{lem:rexists}
Let $\{(x_i,y_i)\}_{i\geq0}$, $q\in Q$ and $p\in F$ and $\mathfrak{C}\in{\sf types}(R^{\not\sim})^5$ as defined in the proof of Lemma~\ref{theo:main}. 
There is a state $r \in Q$ such that 
\begin{align*}
r &\in R^{\not\sim}_{\varepsilon,y_0^{-1}y_1}(q); & p &\in R^{\not\sim}_{\varepsilon,y_1^{-1}x_2}(r); & r &\in R^{\not\sim}_{\varepsilon,y_1^{-1}y_2}(r).
\end{align*}
\end{lemma}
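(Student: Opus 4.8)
The plan is to argue entirely inside the transition monoid $\mathrm{types}(R^{\not\sim})$, exploiting that $\mathfrak{C}_{i,j}=\mathfrak{C}$ for all $i<j$. Write $\mathfrak{C}=(\mathfrak{C}^1,\dots,\mathfrak{C}^5)$ and abbreviate $T_5:=\mathfrak{C}^5$ and $T_4:=\mathfrak{C}^4$. Since the fourth and fifth components of $\mathfrak{C}_{i,j}$ are $R^{\not\sim}_{\varepsilon,\,y_i^{-1}x_j}$ and $R^{\not\sim}_{\varepsilon,\,y_i^{-1}y_j}$, we get $R^{\not\sim}_{\varepsilon,\,y_i^{-1}x_j}=T_4$ and $R^{\not\sim}_{\varepsilon,\,y_i^{-1}y_j}=T_5$ for every $i<j$; in particular the three memberships we must establish for $r$ read $r\in T_5(q)$, $p\in T_4(r)$, and $r\in T_5(r)$. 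Moreover, by the way $q$ and $p$ were chosen in the proof of Lemma~\ref{theo:main}, we have $p\in R^{\not\sim}_{\varepsilon,\,y_0^{-1}x_1}(q)=T_4(q)$ and $p\in F$.

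First I would extract two algebraic identities. Because $y_0\le y_1\le y_2$ are prefixes, reading $(\varepsilon,\,y_0^{-1}y_2)$ splits into reading $(\varepsilon,\,y_0^{-1}y_1)$ and then $(\varepsilon,\,y_1^{-1}y_2)$, so $T_5 = R^{\not\sim}_{\varepsilon,\,y_0^{-1}y_2} = R^{\not\sim}_{\varepsilon,\,y_0^{-1}y_1}\circ R^{\not\sim}_{\varepsilon,\,y_1^{-1}y_2} = T_5\circ T_5$; that is, $T_5$ is idempotent. Likewise, since $y_0\le y_1\le x_2$, reading $(\varepsilon,\,y_0^{-1}x_2)$ factors through $y_1$, yielding $T_4 = R^{\not\sim}_{\varepsilon,\,y_0^{-1}x_2} = R^{\not\sim}_{\varepsilon,\,y_0^{-1}y_1}\circ R^{\not\sim}_{\varepsilon,\,y_1^{-1}x_2} = T_5\circ T_4$. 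A trivial induction then gives $T_5^n\circ T_4 = T_4$ for every $n\ge 1$, hence $p\in T_4(q)=(T_5^n\circ T_4)(q)$ for every $n\ge 1$.

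Next I would unfold $p\in(T_5^{n}\circ T_4)(q)$ for $n:=|Q|+1$ into a chain $t_0=q,t_1,\dots,t_{n}$ with $t_{k+1}\in T_5(t_k)$ for $0\le k<n$ and $p\in T_4(t_{n})$. By the pigeonhole principle applied to $t_1,\dots,t_{n}$ (that is $|Q|+1$ states in a set of size $|Q|$), there are $1\le a<b\le n$ with $t_a=t_b$; put $r:=t_a$. Since $T_5$ is idempotent, iterating $T_5$ any positive number of times is again $T_5$, so: the prefix $t_0,\dots,t_a$ (length $a\ge 1$) gives $r=t_a\in T_5(q)$; the loop $t_a,\dots,t_b$ (length $b-a\ge 1$) gives $r=t_a\in T_5(t_a)=T_5(r)$; and the suffix $t_a,\dots,t_{n}$ (length $n-a\ge 1$, as $a\le b-1<n$) gives $t_{n}\in T_5(t_a)$, whence with $p\in T_4(t_{n})$ we obtain $p\in(T_5\circ T_4)(t_a)=T_4(t_a)=T_4(r)$ using $T_5\circ T_4=T_4$ once more. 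Translating $T_5$ and $T_4$ back to $R^{\not\sim}_{\varepsilon,\,y_0^{-1}y_1}$, $R^{\not\sim}_{\varepsilon,\,y_1^{-1}x_2}$, $R^{\not\sim}_{\varepsilon,\,y_1^{-1}y_2}$ yields exactly the three required memberships, proving the lemma.

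The step I expect to be the crux is isolating the right algebraic content: recognizing that the fifth component $T_5$ of $\mathfrak{C}$ is idempotent and that it left-absorbs the fourth component $T_4$. These are precisely what lets a long $T_5$-chain be collapsed to a loop while still keeping $p$ reachable. Once those identities are in place, the rest is a routine collapsing-cycle argument; the only care needed is with the left-to-right composition convention and with checking that the index inequalities required to factor the padded words (e.g.\ $i<j<k$) are available, which they are since all the sequences are infinite.
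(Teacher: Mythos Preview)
Your proof is correct, and the underlying idea---build a chain of states by repeatedly applying the fifth type and extract a repeated state by pigeonhole---is the same as the paper's. The packaging differs, though. The paper constructs an \emph{infinite} sequence $r_1,r_2,\ldots$ while maintaining at every step the invariant $p\in T_4(r_j)$; the existence of each $r_{j+1}$ is argued on the fly by factoring $T_4$ through $T_5$ (essentially re-deriving your identity $T_4=T_5\circ T_4$ at each step), and then any repeated state is taken as $r$. You instead extract the two algebraic identities $T_5=T_5\circ T_5$ and $T_4=T_5\circ T_4$ once, build a single finite chain of length $|Q|+1$ without carrying any invariant, and use idempotency to recover all three memberships at the repeated state after the fact. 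Your route is a bit cleaner: the algebraic content is stated explicitly, and you avoid the inductive well-definedness argument for the infinite sequence; the paper's route stays closer to the concrete words $y_i^{-1}y_j$ and $y_i^{-1}x_j$, which may make the translation back to the automaton slightly more transparent.
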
 
\begin{proof} 
First of all, let $r_1, r_2, \ldots$ be an infinite sequence of states in $Q$ that satisfies the following. 
\begin{itemize}
\item
We have that $r_1$ is any state that 
satisfies $r_1 \in R^{\not\sim}_{\varepsilon,y_0^{-1}y_1}(q)$ and $p \in R^{\not\sim}_{\varepsilon,y_1^{-1}x_2}(r_1)$. 

\item For each $j > 1$ it is the case that
$r_{j} \in R^{\not\sim}_{\varepsilon,y_{j-1}^{-1}y_{j}}(r_{j-1})$ and $p \in R^{\not\sim}_{\varepsilon,y_j^{-1}x_{j+1}}(r_{j})$.

\end{itemize} 

We explain next why the sequence $r_1, r_2, \ldots$ is well-defined. 
Notice first that state $r_1$ must exist 
as $p \in R^{\not\sim}_{\varepsilon,y_0^{-1}x_2}(q)$ and 
\begin{align*}
R^{\not\sim}_{\varepsilon,y_0^{-1}x_2}(q) = R^{\not\sim}_{\varepsilon,y_0^{-1}y_1}(q) \circ R^{\not\sim}_{\varepsilon,y_1^{-1}x_2}.
\end{align*}
The former is the case as $\mathfrak{C}_{0,1} = \mathfrak{C}_{0,2} = \mathfrak{C}$, and thus
$R^{\not\sim}_{\varepsilon,y_0^{-1}x_1} = R^{\not\sim}_{\varepsilon,y_0^{-1}x_2}$. 
This implies that $p \in R^{\not\sim}_{\varepsilon,y_0^{-1}x_2}(q)$, as we know that $p \in R^{\not\sim}_{\varepsilon,y_0^{-1}x_1}(q)$.

Assume now that we have identified states $r_1, r_2, \ldots, r_k$, for $k \geq 1$, such that 
for each $j \leq k$ it is the case that
$r_{j} \in R^{\not\sim}_{\varepsilon,y_{j-1}^{-1}y_{j}}(r_{j-1})$ and $p \in R^{\not\sim}_{\varepsilon,y_j^{-1}x_{j+1}}(r_{j})$. 
Then $r_{k+1}$ is any state in $Q$ that 
satisfies $r_{k+1} \in R^{\not\sim}_{\varepsilon,y_k^{-1}y_{k+1}}(r_{k})$ and $p \in R^{\not\sim}_{\varepsilon,y_{k+1}^{-1}x_{k+2}}(r_{k+1})$. 
Notice that state $r_{k+1}$ must exist 
as $p \in R^{\not\sim}_{\varepsilon,y_k^{-1}x_{k+2}}(r_k)$ and 
\begin{align*}
R^{\not\sim}_{\varepsilon,y_k^{-1}x_{k+2}}(r_k) \, = \, R^{\not\sim}_{\varepsilon,y_k^{-1}x_{k+1}}(r_k) \circ 
R^{\not\sim}_{\varepsilon,y_{k+1}^{-1}x_{k+2}}.
\end{align*}
Again, the former is the case as $\mathfrak{C}_{k,k+1} = \mathfrak{C}_{k,k+2} = \mathfrak{C}$, and thus
$R^{\not\sim}_{\varepsilon,y_k^{-1}x_{k+1}} = R^{\not\sim}_{\varepsilon,y_k^{-1}x_{k+2}}$. 
This implies that $p \in R^{\not\sim}_{\varepsilon,y_k^{-1}x_{k+2}}(r_k)$, as we know that 
$p \in R^{\not\sim}_{\varepsilon,y_k^{-1}x_{k+1}}(r_k)$ by hypothesis.

An important property of the sequence $r_1, r_2, \ldots$, as defined above, is that $r_k \in R^{\not\sim}_{\varepsilon,y_0^{-1}y_k}(q)$, for each $k \geq 1$, and 
$r_k \in R^{\not\sim}_{\varepsilon,y_j^{-1}y_k}(r_j)$, for each $1 \leq j < k$. This can be proved easily by induction on $k \geq 1$. 

Since the sequence $r_1, r_2, \ldots$ is infinite, there must be integers 
$1 \leq j < k$ such that $r_j = r_k$. Therefore, it is the case that 
\begin{enumerate} 
\item $r_j \in R^{\not\sim}_{\varepsilon,y_0^{-1}y_j}(q)$; 
\item $p \in R^{\not\sim}_{\varepsilon,y_j^{-1}x_{j+1}}(r_j)$, which holds by definition of $r_j$ in the sequence $r_1, r_2, \ldots$; 
\item $r_k \in R^{\not\sim}_{\varepsilon,y_j^{-1}y_k}(r_j)$, and thus 
$r_j \in R^{\not\sim}_{\varepsilon,y_j^{-1}y_k}(r_j)$; and 
\item $p \in R^{\not\sim}_{\varepsilon,y_j^{-1}x_{k+1}}(r_j)$. 
\end{enumerate} 
It follows then that 
\begin{align*}
r_j &\in R^{\not\sim}_{\varepsilon,y_0^{-1}y_1}(q); & p &\in R^{\not\sim}_{\varepsilon,y_1^{-1}x_2}(r_j); & r_j &\in R^{\not\sim}_{\varepsilon,y_1^{-1}y_2}(r_j).
\end{align*}
In fact: 
\begin{itemize}
\item $r_j \in R^{\not\sim}_{\varepsilon,y_0^{-1}y_1}(q)$, as $r_j \in R^{\not\sim}_{\varepsilon,y_0^{-1}y_j}(q)$ by hypothesis and, in addition, we have that 
$\mathfrak{C}_{0,1} = \mathfrak{C}_{0,j} = \mathfrak{C}$, i.e., $R^{\not\sim}_{\varepsilon,y_0^{-1}y_{j}} = R^{\not\sim}_{\varepsilon,y_0^{-1}y_{1}}$. 
\item $p \in R^{\not\sim}_{\varepsilon,y_1^{-1}x_2}(r_j)$, as $p \in R^{\not\sim}_{\varepsilon,y_j^{-1}x_{j+1}}(r_j)$ by hypothesis and, in addition, we have that 
$\mathfrak{C}_{j,j+1} = \mathfrak{C}_{1,2} = \mathfrak{C}$, i.e., $R^{\not\sim}_{\varepsilon,y_j^{-1}x_{j+1}} = R^{\not\sim}_{\varepsilon,y_1^{-1}x_{2}}$. 
\item $r_j \in R^{\not\sim}_{\varepsilon,y_1^{-1}y_2}(r_j)$, as $r_j \in R^{\not\sim}_{\varepsilon,y_j^{-1}y_{k}}(r_j)$ by hypothesis and, in addition, we have that 
$\mathfrak{C}_{j,k} = \mathfrak{C}_{1,2} = \mathfrak{C}$, i.e., $R^{\not\sim}_{\varepsilon,y_j^{-1}y_{k}} = R^{\not\sim}_{\varepsilon,y_1^{-1}y_{2}}$. 
\end{itemize} 
We can then set $r = r_j$. 
This finishes the proof of the claim. 
\end{proof}

The runs as extracted from the sequence $\{(x_i,y_i\})_{i\geq0}$ satisfying the conditions of Lemma~\ref{theo:main} are depicted in Figure~\ref{fig:runs}.

\begin{figure}
\centering
\begin{tikzpicture}[initial text=, initial distance=1.5ex,>=stealth, every state/.style={inner sep=2pt, minimum size=0pt},decoration=snake,line around/.style={decoration={pre length=#1, post length=#1}}]
\node[state,initial] (q0) {$q_0$};
\node[state] (q') [right = 1.5cm of q0] {$q'$};
\node[state] (q) [right = 3.5cm of q'] {$q$};
\node[state,accepting] (p) [right = 2.5cm of q] {$p$};
\node[state] (r) [below right = 2cm and 1.25cm of q] {$r$}; 

\path[->] (q0) edge[decorate,line around=2pt] node[above] {$(y_i,y_i)$} (q')
		  (q') edge[decorate,line around=2pt] node[above] {$(y_i^{-1}x_{i+1},y_i^{-1}y_{i+1})$} (q)
		  (q) edge[decorate,line around=2pt] node[above] {$(\varepsilon,y_i^{-1}x_{i+1})$} (p)
		  (q) edge[decorate,line around=2pt] node[below,sloped] {$(\varepsilon,y_i^{-1}y_{i+1})$} (r)
		  (r) edge[decorate,line around=2pt] node[above,sloped] {$(\varepsilon,y_{i+1}^{-1}x_{i+2})$} (p)
		  (r) edge[decorate,line around=2pt,in=-55, out=30, loop] node[right,xshift=0.1cm] {$(\varepsilon,y_{i+1}^{-1}y_{i+2})$} (q');
\end{tikzpicture}
\caption{\label{fig:runs}Runs in $R^{\not\sim}$ on states $q$, $q'$, $p$ and $r$ as defined in Lemma~\ref{theo:main}. The runs exist for every $i\geq0$.}
\end{figure}

Let us briefly return to Example~\ref{ex:stairs}. Now using the notation of Lemma~\ref{theo:main}, the states $q'$, $q$, $r$ and $p$ are
$q'=q_0$, $q=q_1$ and $r=p=q_3$. Observe that a run from $q_0$ to $p$ through $r$
(i.e., a run from $q_0$ to $q_3$) can be used to construct an infinite sequence
of runs on $(w,w')\in R^{\not\sim}$.

Lemma~\ref{theo:main} allows us to reduce the monadic decomposability problem to a set of reachability checks on types.  
With the help of this property, we can then prove Lemma~\ref{theo:ptime}. 

\begin{proof}[Proof of Lemma~\ref{theo:ptime}] 
For each $(q,q',p,r) \in Q \times Q \times Q \times Q$ with $p \in F$ do the following. 
\begin{itemize} 
\item Check if there are words $w_0, v_0,w_1,v_1$ such that $|w_0| = |v_0| > 0$, $|w_1| = |v_1| > 0$, and it holds that 
(i) $q \in R^{\not\sim}_{w_0,v_0}(q_0)$, (ii) $q'\in  R^{\not\sim}_{v_0,v_0}(q_0)$, (iii) $q \in 
R^{\not\sim}_{w_1,v_1}(q')$, (iv) $q' \in R^{\not\sim}_{v_1,v_1}(q')$, (v) 
$p \in R^{\not\sim}_{\varepsilon,w_1}(q)$, and 
(vi) $r \in R^{\not\sim}_{\varepsilon,v_1}(q)$. 
\item Check if there are words $w,v$ such that $|w| = |v| > 0$, and it holds that 
(i) $q \in R^{\not\sim}_{w,v}(q')$, (ii) $q' \in R^{\not\sim}_{v,v}(q')$, (iii) 
$p \in R^{\not\sim}_{\varepsilon,w}(q)$,  
(vi) $r \in R^{\not\sim}_{\varepsilon,v}(q)$, (v) $p \in R^{\not\sim}_{\varepsilon,w}(r)$, and (vi) $r \in R^{\not\sim}_{\varepsilon,v}(r)$.  
\end{itemize} 
If this holds for any such a tuple, then $R$ is not monadic decomposable. Else, $R$ is monadic decomposable. 
It is easy to see that this algorithm can be implemented in $\NLOGSPACE$.  
\end{proof}

We have the necessary ingredients to prove a part of Theorem~\ref{thm:main}.
\begin{lemma}\label{lem:part1}
Deciding whether a given binary regular relation $R$ is monadic decomposable is 
    in $\NLOGSPACE$ (resp.~in $\PSPACE$), if $R$ is given by a 
    DFA (resp.~an NFA). 
\end{lemma}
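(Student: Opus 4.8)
\textbf{Proof plan for Lemma~\ref{lem:part1}.}
The plan is to reduce to Lemma~\ref{theo:ptime}: we build an NFA $\B$ for $R^{\not\sim}$ and run on it the $\NLOGSPACE$ algorithm provided there. The whole point is that, although the automaton for the equivalence relation $\sim$ can be huge, the automaton for its complement $R^{\not\sim}$ stays small, since $R^{\not\sim}$ is obtained from $R$ by Boolean operations followed by a single \emph{existential} quantification; projection corresponds to cheap nondeterminism, whereas it is the complementation hidden inside $\sim$ that would be expensive. Concretely, I would use the effective closure of regular relations under complementation, union, intersection, cylindrification (adding a dummy coordinate), coordinate permutation and projection \cite{Sakarovitch09,hodgson83}, where all of these except complementation and projection are linear in the sizes of the input automata, complementation is linear for DFAs, and projection yields an NFA of linear size. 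Applying these constructions to the formula defining $R^{\not\sim}(w,w')$ --- two XOR-expressions over the ternary relations $R(w,u),R(w',u),R(u,w),R(u,w')$, followed by $\exists u$ --- produces $\B$ together with its initial state, its final states, and its transition relation.

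First I would bound the size of $\B$. If $R$ is given by a DFA with $n$ states, each Boolean combination is a product of a constant number of automata of size $O(n)$, so $\B$ has $\mathrm{poly}(n)$ states; moreover a state of $\B$ is just a tuple of states of the given DFA (and of its complement) plus $O(1)$ control bits, so $\B$ is produced by a $\LOGSPACE$ transducer from the DFA for $R$. If instead $R$ is given by an NFA with $n$ states, the only expensive step is complementation, which forces a determinisation with a $2^{O(n)}$ blow-up; everything else is polynomial on top of that, so $\B$ has $2^{O(n)}$ states, but now I would \emph{not} write $\B$ down. Instead I would note that a state of $\B$ is a tuple of a constant number of subsets of the state set of the input NFA together with $O(1)$ control bits, hence has an $O(n)$-bit description, and that --- given the input NFA --- one can decide in space $\mathrm{poly}(n)$ whether an $O(n)$-bit string encodes the initial state of $\B$, a final state of $\B$, or whether $\B$ has a transition between two given states on a given letter of $\Sigma_\pad\times\Sigma_\pad$ (this reduces to performing subset updates of the determinised automaton). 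So $\B$ is available implicitly in polynomial space.

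Finally I would invoke Lemma~\ref{theo:ptime} on $\B$. Its $\NLOGSPACE$ computation only ever stores a constant number of states of $\B$ --- it enumerates quadruples $(q,q',p,r)$, guesses short reachability witnesses in products of $\B$ with itself, and poses membership queries to the transition monoid of $\B$ --- hence uses space logarithmic in $\|\B\|$. In the DFA case $\|\B\|=\mathrm{poly}(n)$ and $\B$ is $\LOGSPACE$-constructible, so the composition runs in $\NLOGSPACE$. In the NFA case $\|\B\|=2^{O(n)}$, so the algorithm's own bookkeeping uses $O(\log 2^{O(n)})=O(n)$ space, and whenever it needs a bit of $\B$ it recomputes it in space $\mathrm{poly}(n)$ directly from the input NFA; altogether this is $\mathrm{poly}(n)$ space, i.e.\ $\PSPACE$, by the usual composition of a logarithmic-space algorithm with a polynomial-space presentation of an exponential-size object.

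The steps I expect to need the most care are the standard but fiddly cylindrification/padding bookkeeping that turns the $2$-tape automaton for $R_\pad$ into automata for the ternary relations $R(w,u),R(w',u),R(u,w),R(u,w')$ over a commonly padded encoding of $(w,w',u)$, and the on-the-fly simulation in the NFA case of the implicitly given, exponential-size $\B$; but neither of these is a genuine obstacle, and the real content of the lemma is already carried by Lemma~\ref{theo:ptime}.
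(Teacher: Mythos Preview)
Your proposal is correct and follows essentially the same approach as the paper: reduce to Lemma~\ref{theo:ptime} by constructing an NFA for $R^{\not\sim}$ via the closure of regular relations under Boolean operations and projection, observing that for DFA input this is a $\LOGSPACE$ construction (yielding a polynomial-size NFA), while for NFA input the determinisation step costs an exponential blow-up that is handled by composing the $\NLOGSPACE$ algorithm with a $\PSPACE$ presentation of the exponential-size object. The paper's own proof is terser---it simply invokes the composability of space-bounded transducers rather than spelling out the on-the-fly simulation you describe for the NFA case---but the content is the same.
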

\begin{proof}
The claim follows from Lemma~\ref{theo:ptime}. Namely, from the definition
of $R^{\not\sim}$, it follows that, if $R$ is given by a DFA, then $R^{\not\sim}$
can be constructed in $\LOGSPACE$.
Indeed, this can be done as disjunctions, conjunctions and projections can all be
done in $\LOGSPACE$ and then via composability of $\LOGSPACE$ transducers
we can construct $R^{\not\sim}$ of logarithmic size. (Note that the output of a 
$\LOGSPACE$ transducer is of at most polynomial size.) Then by Lemma~\ref{theo:ptime},
we obtain the decidability of monadic decomposability in $\NLOGSPACE$ for $R$
given by a DFA.

Similarly, if $R$ is given by an NFA, we construct $R^{\not\sim}$ of polynomial
size since an NFA can be transformed into a DFA using a $\PSPACE$ transducer. 
(Again, the output of a $\PSPACE$ transducer is of at most exponential size.)
Thus monadic decomposability is in $\PSPACE$.
\end{proof}

\section{Deciding monadic decomposability of regular relations}
\label{sec:general}
In this section, we finish the proof of Theorem~\ref{thm:main}. The remaining
component is showing that monadic decomposability of $n$-ary regular relations
is decidable in $\NLOGSPACE$ for DFA and $\PSPACE$ for NFA.

\begin{lemma}\label{thm:nary}
Deciding whether a given $n$-ary regular relation $R$ is monadic decomposable is 
    in $\NLOGSPACE$ (resp.~in $\PSPACE$), if $R$ is given by a 
    DFA (resp.~an NFA). 
\end{lemma}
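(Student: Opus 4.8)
The plan is to $\LOGSPACE$-reduce the $n$-ary problem to $n-1$ instances of the binary problem already settled in Lemma~\ref{lem:part1}. For each $j \in \{1,\dots,n-1\}$, regard $(\Sigma_\pad)^j$ and $(\Sigma_\pad)^{n-j}$ as two disjoint finite alphabets $\Gamma_1$ and $\Gamma_2$, and define a binary relation $R_j$ over $\Gamma := \Gamma_1 \sqcup \Gamma_2$ by letting $R_j$ relate the convolution $\langle u_1,\dots,u_j\rangle$ to the convolution $\langle u_{j+1},\dots,u_n\rangle$ exactly when $(u_1,\dots,u_n) \in R$; here $\langle u_1,\dots,u_j\rangle$ is the word over $\Gamma_1=(\Sigma_\pad)^j$ whose letters are the columns of the padded tuple $(u_1,\dots,u_j)$. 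An automaton for $R_j$ is obtained from the $n$-tape automaton for $R$ by a pure relabelling of transitions: a transition on $(a_1,\dots,a_n)\in(\Sigma_\pad)^n$ becomes a transition on the pair $\big(\langle a_1,\dots,a_j\rangle,\langle a_{j+1},\dots,a_n\rangle\big)$. The only delicate point is padding, and here I would use the fact that a padded $n$-tuple never reads the symbol $(\pad,\dots,\pad)$ --- at every position some component is still inside its unpadded word --- so a left block $(\pad,\dots,\pad)$ of length $j$ may safely be sent to the fresh padding symbol of $\Gamma$, and symmetrically on the right. This relabelling is injective, hence preserves determinism, and is computable in $\LOGSPACE$; in particular $R_j$ is given by a DFA (resp.\ NFA) whenever $R$ is.

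The core of the proof is the equivalence: $R$ is monadic decomposable if and only if $R_j$ is monadic decomposable for every $j \in \{1,\dots,n-1\}$. The ``only if'' direction is direct: from $R = \bigcup_i X_{1,i}\times\cdots\times X_{n,i}$ one reads off $R_j = \bigcup_i \langle X_{1,i},\dots,X_{j,i}\rangle \times \langle X_{j+1,i},\dots,X_{n,i}\rangle$, a finite union of products of regular languages (the convolution of a tuple of regular languages being regular over the product alphabet). For ``if'', by Lemma~\ref{lm:equivrel} it suffices to show each $\sim_j$ has finite index. Write $\approx^L_S$ and $\approx^R_S$ for the restrictions, to left words and to right words respectively, of the equivalence $\sim$ induced by a binary relation $S$. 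I would then note: first, over $\Gamma_1 \sqcup \Gamma_2$ the induced equivalence of $R_j$ has finitely many classes iff $\approx^L_{R_j}$ and $\approx^R_{R_j}$ do (an easy case analysis: the mixed-sort and empty-image words contribute at most one further class), so monadic decomposability of $R_j$ yields finite index of $\approx^L_{R_j}$; second --- the key identity --- modulo the convolution encoding one has $\sim_j = \approx^L_{R_j} \cap \approx^R_{R_{n-j}}$, because the ``forward'' clause in the definition of $\sim_j$ is exactly $\approx^L_{R_j}$, while the ``backward'' clause, $(w_{j+1},\dots,w_n,u_1,\dots,u_j)\in R$, is exactly $R_{n-j}\big(\langle w_{j+1},\dots,w_n\rangle,\langle u_1,\dots,u_j\rangle\big)$, i.e.\ the right-word equivalence of $R_{n-j}$. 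Since $n-j$ also lies in $\{1,\dots,n-1\}$, monadic decomposability of all the $R_j$ forces both $\approx^L_{R_j}$ and $\approx^R_{R_{n-j}}$ to have finite index for every $j$; as the intersection of two finite-index equivalences has finite index, every $\sim_j$ has finite index and $R$ is monadic decomposable.

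With the equivalence in hand, the complexity bound is routine. To decide monadic decomposability of $R$ one decides it for each of the $n-1$ relations $R_j$, which Lemma~\ref{lem:part1} does; since $R_j$ is a DFA (resp.\ NFA) when $R$ is, and the logspace construction of $R_j$ composes with the algorithm of Lemma~\ref{lem:part1} without leaving the class, each such check stays in $\NLOGSPACE$ (resp.\ $\PSPACE$). In the DFA case I would argue via the complement: ``$R$ is not monadic decomposable'' iff some $R_j$ is not, so a nondeterministic logspace machine guesses $j$, builds the DFAs for $R_j$ and then for $R_j^{\not\sim}$ on the fly (as in the proof of Lemma~\ref{lem:part1}), and runs the $\NLOGSPACE$ non-decomposability test of Lemma~\ref{theo:ptime}; hence the complement, and therefore the problem itself, is in $\NLOGSPACE$ by closure under complement (Immerman--Szelepcs\'enyi). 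In the NFA case each $R_j$ is an NFA of the same size as $R$, so the same scheme together with the $\PSPACE$ bound of Lemma~\ref{lem:part1} gives $\PSPACE$.

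I expect the main obstacle to be getting the reduction exactly right rather than the complexity accounting: on the automaton side, checking that the relabelling together with the padding convention genuinely recognizes $R_j$ and keeps the automaton deterministic; and on the semantic side, nailing down the identity $\sim_j = \approx^L_{R_j} \cap \approx^R_{R_{n-j}}$, which is precisely what makes ``$R_j$ monadic decomposable for all $j$'' both necessary and sufficient rather than only necessary.
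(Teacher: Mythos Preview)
Your reduction is correct and matches the paper's on the automata-theoretic side: the construction of $R_j$ from $R$ by relabelling (sending an all-$\pad$ block to the fresh padding symbol) is exactly what the paper does in its proof of Lemma~\ref{thm:nary}, including the observation that the relabelling is injective and hence preserves determinism, and that it is $\LOGSPACE$-computable. The complexity wrap-up (guessing $j$, running Lemma~\ref{theo:ptime}, and invoking closure under complement for $\NLOGSPACE$) is also in line with the paper, which simply invokes Lemma~\ref{lem:part1} on each $R_j$.

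Where you diverge is in the proof of the semantic equivalence ``$R$ monadic decomposable iff every $R_j$ is.'' The paper's Lemma~\ref{lemma:nary} proves the ``if'' direction constructively: it shows by induction on $k$ that $R_{1^{k-1},n-k+1}$ is monadic decomposable, building an explicit decomposition of $R_{1^k,n-k}$ from one of $R_{1^{k-1},n-k+1}$ together with one of $R_k$. Your argument instead goes through Lemma~\ref{lm:equivrel}: you observe the identity $\sim_j \,=\, \approx^L_{R_j} \cap \approx^R_{R_{n-j}}$ (up to the convolution encoding), note that each conjunct has finite index whenever the corresponding binary relation is monadic decomposable, and conclude. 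This is a genuinely different and arguably more transparent route, since it ties the $n$-ary criterion directly back to the binary one without an inductive rebuild of the decomposition; on the other hand, the paper's approach has the advantage of actually exhibiting the decomposition of $R$ from those of the $R_k$'s, which is useful if one cares about outputting the decomposition (cf.\ the paper's concluding remarks).
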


\begin{proof}[Proof of Theorem~\ref{thm:main}]
The upper bounds follow from Lemma~\ref{thm:nary} and the lower bound follows
from Lemma~\ref{thm:lb} for NFA and from Lemma~\ref{lem:lbdfa} for DFA.
\end{proof}

In order to prove Lemma~\ref{thm:nary}, we extend Lemma~\ref{lem:part1} to $n$-ary relations. Let us first define
some helpful notation used throughout the section.

Recall that words of regular relations are padded to be of the same length using
$\pad$. We denote this function by $\PADD{\pad}$. For example,
$\PAD{\pad}{(a,\varepsilon,ab)} = (a\pad, \pad\pad, ab)$.
Let us now define a padding function $\delta_n$ that acts slightly differently.
Instead of padding the words in a tuple to make them of the same length, the new
function pads a sequence of tuples with tuples where some elements are $\pad$.
Let us describe $\delta_n$ in more details. Define
$\Sigma_n = (\Sigma_\pad)^n\setminus\{\pad^n\}$, i.e., an alphabet consisting of
$n$-tuples of letters from $\Sigma_\pad$, excluding $(\pad, \ldots, \pad)$. Now $\delta_n : (\Sigma^*)^n \to \Sigma_n^*$ is an injective mapping that uses $\pad$ to extend
the shorter words to the same length as the longest word.
For example, $\delta_3$ maps $(a, \varepsilon, ab) \in (\Sigma^*)^3$ to
$(a,\pad,a) (\pad,\pad,b) \in \Sigma_3^*$ as follows:
\begin{align*}
(a,\varepsilon,ab) \xrightarrow{}
\begin{pmatrix}
a\\
\varepsilon\\
ab
\end{pmatrix} \xrightarrow{}
\begin{pmatrix}
a\pad\\
\pad\pad\\
a b
\end{pmatrix}  \xrightarrow{}
\begin{pmatrix}
a\\
\bot\\
a
\end{pmatrix}\begin{pmatrix}
\bot\\
\bot\\
b
\end{pmatrix} \xrightarrow{}
(a,\pad,a)(\pad,\pad,b).
\end{align*}

\begin{lemma}
    For $n\ge 1$, $\{(x_1, \ldots, x_n, y) \mid
    \delta_n(x_1, \ldots, x_n) = y \}
    \subseteq (\Sigma^*)^n \times \Sigma_n^* $ is regular.
\end{lemma}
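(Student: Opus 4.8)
The plan is to construct an $(n+1)$-tape automaton recognizing the padded version of the relation
\[
T_n \;=\; \{(x_1, \ldots, x_n, y) \mid \delta_n(x_1, \ldots, x_n) = y\} \;\subseteq\; (\Sigma^*)^n \times \Sigma_n^*.
\]
Observe first that in any tuple in $T_n$, the word $y$ over $\Sigma_n$ has length exactly $\max_i |x_i|$, so the last component is the longest (or tied for longest). Hence after applying $\PADD{\pad}$ to an $(n+1)$-tuple $(x_1,\dots,x_n,y)\in T_n$, each $x_i$ is padded with $\pad$'s on the right up to length $|y|$, while $y$ itself is unchanged. The automaton reads, at each step, an $(n+1)$-tuple consisting of one symbol from $\Sigma_\pad$ for each of the first $n$ tracks and one symbol from $(\Sigma_n)_\pad = \Sigma_n \cup \{\pad\}$ for the last track.

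The key step is to specify the transition relation so that it enforces, position by position, the defining condition of $\delta_n$. At position $k$ (for $k=1,\dots,\max_i|x_i|$), the automaton reads $(a_1,\dots,a_n)$ on the first $n$ tracks, where $a_i \in \Sigma$ if $k \le |x_i|$ and $a_i = \pad$ otherwise, and it reads a single letter $c$ on the last track. The automaton accepts this step iff $c = (b_1,\dots,b_n) \in \Sigma_n$ with $b_i = a_i$ for every $i$; note that since $k \le \max_i |x_i|$, not all $a_i$ are $\pad$, so $(b_1,\dots,b_n) \neq \pad^n$ and indeed lies in $\Sigma_n$. This is a purely local, memoryless check, so a single-state automaton (with a sink for rejection, and with all tracks required to reach their end simultaneously, which is automatic after $\PADD{\pad}$) suffices: the state set is $\{q_0,q_{\mathrm{sink}}\}$ with $q_0$ both initial and final. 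One must also rule out the degenerate-but-inconsistent cases, e.g.\ the last track being longer than $\max_i|x_i|$ — but this cannot happen after padding, since all components are padded to the common length $|y| = \max_i|x_i|$, so the last track is never strictly the shortest and a padding symbol $\pad$ on the last track only appears when every other track is also exhausted, which is caught by the standard end-of-word handling of $n$-tape automata.

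The main obstacle — really the only subtlety — is bookkeeping the interaction between the two padding mechanisms: $\PADD{\pad}$ pads the $(n+1)$-tuple to a common length, while $\delta_n$ is itself defined via internal padding to the common length of the $x_i$'s. One must check that these two are compatible, i.e.\ that the common length imposed by $\PADD{\pad}$ on $(x_1,\dots,x_n,y)$ equals $|y|=\max_i|x_i|$, so that reading the $\pad$-padded $x_i$ tracks against the $y$ track lines up exactly with the internal padding in the definition of $\delta_n$. Once this alignment is verified, correctness of the construction is immediate by induction on position $k$, and regularity of $T_n$ follows from the definition of regular relation.
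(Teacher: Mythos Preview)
The paper states this lemma without proof; it is treated as routine. Your construction is correct and is exactly the kind of argument one would expect here: a single-state $(n+1)$-tape automaton that at every position checks that the letter on the $(n+1)$st track equals the $n$-tuple of letters on the first $n$ tracks.

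One expository wobble worth cleaning up: in your final paragraph you argue that the case ``last track longer than $\max_i |x_i|$'' \emph{cannot happen after padding, since all components are padded to the common length $|y|=\max_i|x_i|$}. This equality $|y|=\max_i|x_i|$ holds only for tuples that are \emph{in} $T_n$; for arbitrary inputs $(x_1,\dots,x_n,y)$ it need not hold, and the automaton must still reject those. Fortunately your position-by-position check already handles both bad cases automatically: if $|y|>\max_i|x_i|$, then at some position all first-$n$ tracks read $\pad$ while the last track reads some $c\in\Sigma_n$, and since $(\pad,\dots,\pad)\notin\Sigma_n$ the check $c=(a_1,\dots,a_n)$ fails; if $|y|<\max_i|x_i|$, then at some position the last track reads the padding symbol while $(a_1,\dots,a_n)\in\Sigma_n$, and again the check fails. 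So the construction is sound; only the justification of that paragraph should be rephrased to avoid assuming membership in $T_n$ when arguing about rejection.
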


Given an $n$-ary relation $R \subseteq (\Sigma^*)^n$
and positive integers $x_1, \dots, x_m$ such that $\sum_{i=1}^m x_i = n$,
an $m$-ary relation $R_{x_1, \dots, x_m} \subseteq
\Sigma_{x_1}^* \times \cdots \times \Sigma_{x_m}^*$
can be uniquely determined via the mappings $\delta_{x_1}, \dots, \delta_{x_m}$.
More precisely, there exists a one-to-one correspondence $\Delta_{x_1, \dots, x_m}$
between relations $R$ and $R_{x_1, \dots, x_m}$ that maps each
$(w_1, \ldots, w_n) \in R$ to
\begin{align*}
  (\delta_{x_1}(w_1, \ldots, w_{x_1}), \delta_{x_2}(w_{x_1+1}, \ldots, w_{x_1+x_2}),
    \ldots, \delta_{x_m}(w_{x_1+\cdots+ x_{m-1}+1}, \ldots, w_n))
\in R_{x_1, \dots, x_m}.
\end{align*}
For example, a ternary relation $R=\{(a,\varepsilon, ab)\}$ over $(\Sigma^*)^3$
uniquely determines a binary relation $R_{1,2} = \{(a, (\pad,a)(\pad,b))\}$
over $\Sigma_1^*\times \Sigma_2^*$ through the correspondence $\Delta_{1,2}$.
For the sake of readability, if the integers $x_1,\ldots,x_m$ have a constant
subsequence of length $k$, i.e., $x_i=x_{i+1}=\cdots = x_{i+k-1}$ for some $i$, we write the relation
as $R_{x_1,\ldots,x_{i-1},x_i^{k},x_{i+k},\ldots,x_m}$.

In the following, we shall use $R_k$ to denote the binary relation
$R_{k,n-k}$ induced by $R$.
It turns out that being able to check monadic decomposability for binary relations
is sufficient to check monadic decomposability for general $n$-ary relations.
\begin{lemma}\label{lemma:nary}
Let $R$ be an $n$-ary regular relation and let $R_1, \dots , R_{n-1}$ be the induced binary relations. Then $R$ is monadic decomposable
iff
$R_1, \dots , R_{n-1}$ are monadic decomposable.
\end{lemma}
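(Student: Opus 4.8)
The plan is to prove the two directions separately, in both cases reducing monadic decomposability to the finite-index characterisation of Lemma~\ref{lm:equivrel}.

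For the direction ``$R$ monadic decomposable $\Rightarrow$ each $R_k$ monadic decomposable'', I would start from a decomposition $R = \bigcup_{i=1}^m X_{1,i} \times \cdots \times X_{n,i}$ with every $X_{j,i}$ regular and push it through the correspondence $\Delta_{k,n-k}$ componentwise, obtaining $R_k = \bigcup_{i=1}^m \bigl(\delta_k(X_{1,i} \times \cdots \times X_{k,i})\bigr) \times \bigl(\delta_{n-k}(X_{k+1,i} \times \cdots \times X_{n,i})\bigr)$. It then remains to observe that each factor, e.g.\ $\delta_k(X_{1,i} \times \cdots \times X_{k,i})$, is a regular language over $\Sigma_k$: it is the image of the regular $k$-ary relation $X_{1,i} \times \cdots \times X_{k,i}$ under the function $\delta_k$, whose graph is a regular relation by the preceding lemma, so regularity of the image follows from closure of regular relations under conjunction and projection. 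Hence $R_k$ is a finite union of products of regular languages, i.e.\ monadic decomposable.

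For the converse, assume $R_1,\dots,R_{n-1}$ are all monadic decomposable. By Lemma~\ref{lm:equivrel} it is enough to bound, for each $j = 1,\dots,n-1$, the index of the equivalence $\sim_j$. I would write $\sim_j$ as the intersection $\equiv_j^{\mathrm f}\cap\equiv_j^{\mathrm b}$ of its two defining halves: $\equiv_j^{\mathrm f}$ identifies two $j$-tuples that have the same set of completions appended \emph{on the right}, and $\equiv_j^{\mathrm b}$ those with the same set of completions appended \emph{on the left}. The crucial point is that, since $\delta_j$ is injective and $R_j = \Delta_{j,n-j}(R)$, unwinding the definitions shows that $(u_1,\dots,u_j)\equiv_j^{\mathrm f}(v_1,\dots,v_j)$ iff $\delta_j(u_1,\dots,u_j)$ and $\delta_j(v_1,\dots,v_j)$ are left-equivalent for the binary relation $R_j$ (equal sets of right partners); symmetrically, using $R_{n-j} = \Delta_{n-j,j}(R)$, one gets $(u_1,\dots,u_j)\equiv_j^{\mathrm b}(v_1,\dots,v_j)$ iff $\delta_j(u_1,\dots,u_j)$ and $\delta_j(v_1,\dots,v_j)$ are right-equivalent for $R_{n-j}$. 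A monadic decomposable binary relation has finitely many left-equivalence classes and finitely many right-equivalence classes, and both $R_j$ and $R_{n-j}$ lie among $R_1,\dots,R_{n-1}$; therefore $\equiv_j^{\mathrm f}$ and $\equiv_j^{\mathrm b}$ each have finite index, and so does $\sim_j = \equiv_j^{\mathrm f}\cap\equiv_j^{\mathrm b}$, its index being at most the product of the two. Applying Lemma~\ref{lm:equivrel} to the regular relation $R$ then gives that $R$ is monadic decomposable.

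I expect the main work to be the bookkeeping behind the ``unwinding the definitions'' step: one must check that $\delta_j$ is a bijection onto its image $\mathrm{im}(\delta_j)\subseteq\Sigma_j^*$, that words $z\in\Sigma_{n-j}^*\setminus\mathrm{im}(\delta_{n-j})$ contribute vacuously (both membership tests fail), and that a word over $\Sigma_j$ and a word over $\Sigma_{n-j}$ are never confused when $j\neq n-j$; one must also record the (easy direction of the) binary case of Lemma~\ref{lm:equivrel}, namely that a monadic decomposable binary relation has only finitely many left- and right-equivalence classes. None of these points is deep, but they are exactly what makes the two encodings line up so that the index bound $\#(\sim_j)\le\#(\equiv_j^{\mathrm f})\cdot\#(\equiv_j^{\mathrm b})$ and the final appeal to Lemma~\ref{lm:equivrel} go through; everything else is routine.
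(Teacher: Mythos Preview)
Your proof is correct, and the forward direction is handled exactly as in the paper. For the converse, however, you take a genuinely different route. The paper argues constructively, by induction on $k$: it introduces the notion of $R$ being \emph{$k$-decomposable} (meaning $R_{1^{k-1},n-k+1}$ is monadic decomposable), takes $k=2$ as the base case from the hypothesis on $R_1$, and in the inductive step combines a decomposition $R_{1^{k-1},n-k+1}=\bigcup_i X_{i,1}\times\cdots\times X_{i,k-1}\times Y_i$ with a decomposition $R_k=\bigcup_j A_j\times B_j$ to produce an explicit decomposition of $R_{1^k,n-k}$, using projections of $\delta_k^{-1}(A_j)$ intersected with the $X_{i,\cdot}$'s. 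Your argument instead goes through the finite-index characterisation of Lemma~\ref{lm:equivrel}: you split $\sim_j$ into its forward and backward halves and observe that, via the bijection $\delta_j$, these are exactly the left-equivalence of $R_j$ and the right-equivalence of $R_{n-j}$, both of finite index by monadic decomposability. Your approach is shorter and more conceptual, and it makes transparent why only the $R_k$'s (and not all $R_{x_1,\ldots,x_m}$) are needed; the paper's approach, on the other hand, actually builds the decomposition of $R$, which could be useful elsewhere even though it is not required by the lemma's statement.
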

\begin{proof} 
Define $\delta_i(S) = \{ \delta_i(s_1, \ldots, s_i) \mid (s_1, \ldots, s_i)\in S \}$.
The only-if part of the lemma is immediate, since
$R = \bigcup\nolimits_i X_{i,1} \times \cdots \times X_{i,n}$
implies that
$R_k = \bigcup\nolimits_i \delta_k(X_{i,1} \times \cdots \times X_{i,k}) \times \delta_{n-k}(X_{i,k+1}\times \cdots \times X_{i,n})$
for $1 \le k \le n-1$, namely, $R_1, \dots , R_{n-1}$ are monadic decomposable.

To see the other direction,
we say that an $n$-ary relation $R$ is \emph{$k$-decomposable} if the induced $k$-ary relation
$R_{1^{k-1},n-k+1}$ of $R$ is monadic decomposable.
Now it suffices to show that $R$ is $n$-decomposable since $R = R_{1^n}$.
We shall prove this by induction on $k \in \{2,\ldots, n\}$.
Note that $R$ is 2-decomposable by the assumption that $R_1$ is monadic decomposable.
For $2\le k \le n-1$, suppose that $R_k = \bigcup\nolimits_j A_j \times B_j$
and $R$ is $k$-decomposable, say
$R_{1^{k-1},n-k+1} = \bigcup\nolimits_i X_{i,1} \times \cdots \times X_{i,k-1} \times Y_i$.
Then $R$ is $(k+1)$-decomposable as we have
\begin{align*}
R_{1^k,n-k} =
\bigcup\nolimits_i \bigcup\nolimits_j X_{i,1} \times \cdots \times X_{i,k-1} \times A_{i,j} \times B_j,
\end{align*}
where
$A_{i,j} = \{ x \in \Sigma^* \mid \exists x_1\in X_{i,1}
\cdots \exists x_{k-1} \in X_{i,k-1}.~ \delta_k(x_1, \dots, x_{k-1}, x) \in A_j \}$,
i.e., $A_{i,j}$ is the projection of
$\delta_k^{-1}(A_j) \cap (X_{i,1} \times \cdots \times X_{i,k-1} \times \Sigma^*)$
on the $k$-th component.
Note that
$\delta_k^{-1}(A_j) $ is regular since $A_j$ and
$\{(x_1, \ldots, x_k, y) \mid \delta_k(x_1, \ldots, x_k) = y \}$
are regular (cf.~\cite{Bl99}).
Hence $A_{i,j}$ is also regular.
The claim that $R$ is $n$-decomposable then follows by induction.
\end{proof}

We can then obtain our desired result. 

\begin{proof}[Proof of Lemma~\ref{thm:nary}]
To prove the lemma, we show that if $R$ is regular, then so are the induced relations
$R_1,\ldots,R_{n-1}$. Moreover, given the automaton of $R$, one can construct the
automaton for each $R_i$ 
in
logarithmic space from $R$. 

We first show that if an $n$-ary relation $R$ is regular,
so are the induced binary relations $R_1, \ldots, R_{n-1}$.
Let $\A_\pad = (\Sigma_\pad, Q, \rightarrow_{\A}, q_0, F)$
be an $n$-tape automaton recognizing $R_\pad$ and fix $k \in \{1,\ldots, n-1\}$.
We argue that there exists a two-tape automaton
$\B_{\pad'} = (\Sigma'_{\pad'}, Q, \rightarrow_{\B}, q_0, F)$
recognizing $(R_k)_{\pad'}$.
The definitions of $\B_{\pad'}$ and $\A_\pad$ differ only
in their alphabet, padding symbols, and transition relations:
the alphabet of $\B_{\pad'}$ is
$\Sigma' = \Sigma_k \cup \Sigma_{n-k}$;
the padding symbol $\pad'$ is a fresh symbol
not used in $\Sigma \cup \Sigma'$;
the transition relation of $\B_{\pad'}$ is
determined from that of $\A_\pad$ in the following way:
for each transition
$\tau = (q, s_1, \ldots, s_n, S)$ in $\rightarrow_{\A}$,
there is a transition $\tau'$ in $\rightarrow_{\B}$ such that
$\tau' = (q, \lambda_k(s_1, \ldots, s_k), \lambda_{n-k}(s_{k+1}, \ldots, s_n), S)$
and vice versa.
Here $\lambda_i : (\Sigma_\pad)^i \to (\Sigma_i)_{\pad'}$
is defined by
\begin{align*}
\lambda_{i}(s_{1},\ldots,s_{i}) & =\begin{cases}
(s_{1}, \ldots, s_{i}), & \mbox{if \ensuremath{s_{j} \neq \bot} for some \ensuremath{j \in \{1,\ldots, i\}};}\\
\bot', & \mbox{otherwise}.
\end{cases}
\end{align*}
Now we show that $\B_{\pad'}$ recognizes $(R_k)_{\pad'}$.
It is easy to see that $\lambda_{k}$ and $\lambda_{n-k}$ together
induce a one-to-one mapping from $L(\A_\pad)$ to $L(\B_{\pad'})$.
Denote this mapping with $\Lambda_{k,n-k}$.
Given an arbitrary $w = (w_1, \ldots, w_n) \in R$,
where $w_i=a_{i,1}\cdots a_{i,m}$ for each $i\in\{1,\ldots,n\}$,
$n$-tape automaton $\A_\pad$ has a run, say,
\begin{align*}
q_{0}\xrightarrow{\mbox{\footnotesize$(a_{1,1},\ldots,a_{n,1})$}}
q_{1}\to\cdots\to q_{m-1}\xrightarrow{\mbox{\footnotesize$(a_{1,m},\ldots,a_{n,m})$}}q_{m},
\end{align*}
that accepts $\PAD{\pad}{w}$.
We will write $(u,v)$ as $\sdomi{u}{v}$ for the sake of readability when talking
about a run on $\B_{\pad'}$.
Now by definition, $\B_{\pad'}$ has a run
\begin{align*}
q_{0}\xrightarrow{\begin{pmatrix}
    \lambda_{k}(a_{1,1},\ldots,a_{k,1})\\
    \lambda_{n-k}(a_{k+1,1},\ldots,a_{n,1})
    \end{pmatrix}}q_{1}\to\cdots\to q_{m-1}\xrightarrow{\begin{pmatrix}
    \lambda_{k}(a_{1,m},\ldots,a_{k,m})\\
    \lambda_{n-k}(a_{k+1,m},\ldots,a_{n,m})
    \end{pmatrix}}q_{m}
\end{align*}
that accepts $\Lambda_{k,n-k}(\PAD{\pad}{w})$,
which can be written as
$\PAD{\pad'}{(b_1\cdots b_\ell,~c_1\cdots c_r)}$ with
$b_i = (a_{1,i},\dots,a_{k,i}) \in \Sigma_k$
for $i \in \{1, \ldots, \ell\}$,
$c_i = (a_{k+1,i},\ldots,a_{n,i}) \in \Sigma_{n-k}$
for $i \in \{1, \dots, r\}$, and $\max\{\ell, r\} = m$.
Note that
$b_1 \cdots b_\ell = \delta_{k}(w_1, \ldots, w_{k})$ and
$c_1 \cdots c_r = \delta_{n-k}(w_{k+1}, \ldots, w_{n})$.
Hence we have
$(b_1 \cdots b_\ell,~c_1 \cdots c_r) =
\Delta_{k,n-k}((w_1, \ldots,w_n)) = \Delta_{k,n-k}(w)$,
which implies that
$\Lambda_{k,n-k}(\PAD{\pad}{w}) = \PAD{\pad'}{\Delta_{k,n-k}(w)}$.
Since $w \in R$ is arbitrary,
the two mappings
$\Lambda_{k,n-k} \circ \PADD{\pad}$ and
$\PADD{\pad'} \circ \Delta_{k,n-k}$
coincide on domain $R$.
However, note that
$\Lambda_{k,n-k} \circ \PADD{\pad}$ and
$\PADD{\pad'} \circ \Delta_{k,n-k}$
are isomorphisms from $R$ to $L(\B_{\pad'})$
and from $R$ to $(R_k)_{\pad'}$, respectively.
It then follows that $L(\B_{\pad'}) = (R_k)_{\pad'}$.

In order to construct the $n-1$ automata for $R_i$'s (i.e., $\B_{\pad'}$)
with a logarithmic space transducer, observe that each transition in each automaton
$\B_{\pad'}$ is simply a projection of some transition in $R$, and hence the number
of transitions in $\B_{\pad'}$ is at most the number of transitions in $\A_\pad$.
Then, the logarithmic space transducer would need to keep track of which transition
in $\A_\pad$, which letter in $\Sigma_\pad$, and finally which position in the
product label $(a_1,\ldots,a_n)$ is being transformed. This can be done in 
$O(\log|\A_\pad| + \log|\Sigma_\pad| + \log n) = O( \log|\A_\pad| )$ space.
To show the $\NLOGSPACE$ (resp.~$\PSPACE$) complexity, we invoke the algorithm
of Lemma~\ref{lem:part1} for each $R_i$.
\end{proof}

\section{Concluding Remarks}
\label{sec:conc}

Monadic decomposability for rational relations (and subclasses thereof)
is a classical problem in automata theory that dates back to the late
1960s (the work of Stearns \cite{Stearns} and Fischer and Rosenberg
\cite{FR68}). While the general problem is undecidable, the subcase of regular
relations (i.e. those recognized by synchronized multi-tape automata) provides
a good balance between decidability \cite{Lib00,CCG06} and expressiveness. 
The complexity of this subcase remained open for over a decade
(exponential-time upper bound
for the binary case \cite{LS17,LS19}, double exponential-time upper bound in
the general case \cite{CCG06}, and no specific lower bounds). This paper closes 
this question by providing the precise complexity for the problem: 
$\NLOGSPACE$ (resp.~$\PSPACE$) for DFA (resp.~NFA) representations. 

\smallskip
\noindent
\textbf{Some perspectives from formal verification and future work: } 
Researchers from the area of formal verification have increasingly understood
the importance of the monadic decompositions techniques, e.g., see
\cite{VBN+17}. Directly pertinent to monadic decomposability of regular 
relations is the line of work of constraint solving over strings, wherein
increasingly more complex string operations are needed and thus added to
solvers \cite{S3,Abdulla14,LB16,Abdulla17,CCHLW18,trau18,popl19}.
As an example, let us take a look at the recent work of Chen \emph{et al.} 
\cite{popl19}, which spells out a string constraint language with semantic 
conditions 
for decidability that directly use the notion of monadic decomposability of 
relations over strings. Loosely speaking, a constraint is simply a sequence of
program statements, each being either an assignment or a conditional:
\begin{align*}
    S ::= \qquad y := f(x_1,\ldots,x_r) \ |\
    \text{\ASSERT{$g(x_1,\ldots,x_r)$}}\ |\
            S; S\
\end{align*}
where $f: (\Sigma^*)^r \to \Sigma^*$ is a partial string function and 
$g\subseteq (\Sigma^*)^r$ is a string relation. The meaning of a constraint
is what one would expect in a program written in a standard imperative
programming language, which should support assignments and assertions. Note that
loops are not
allowed in the language since their target application is symbolic
executions (e.g.~see \cite{symbex-survey}). They provided two semantic 
conditions for ensuring decidability, one of which requires that each 
conditional $g$ is effectively monadic decomposable.
There is evidence (e.g.~\cite{Vijay-length,popl19}) that some form of
length reasoning in $g$ is indeed required for many applications of symbolic 
executions 
of string-manipulating programs, but much of the length constraints could be 
(not yet fully automatically) translated to regular constraints.
A potential application for our results is therefore to provide support for
complex string relations for $g$ in the form of regular relations, which permit
a rather expressive class of conditionals (e.g. some form of length reasoning, 
etc.). Despite this, this application also highlights what is currently missing
in the entire literature of monadic decomposability of rational relations:
a study of the problem of \emph{outputting} the monadic decompositions of
the relations, if monadic decomposable. (In fact, this is also true of other logical 
theories before the recent work of Veanes \emph{et al.} \cite{VBN+17}.)
What is the complexity of this problem with various representations of
recognizable relations (e.g. finite unions of products, boolean combinations of
regular constraints, etc.)? Although our results provide \emph{a first step} 
towards solving this function problem, we strongly believe this to be a highly challenging open problem in its own right that
deserves more attention.


\bibliographystyle{plainurl}
\bibliography{references}

%
%

\end{document}